\newtheorem{thrm}{\textbf{Theorem}} 
\newtheorem{defin}{\textbf{Definition}}
\newtheorem{facts}{\textbf{Fact}}
\newtheorem{props}{\textbf{Proposition}}
\newtheorem{remark}{\textbf{Remark}} 
\newtheorem{probs}{\textbf{Problem}}
\newtheorem{coro}{\textbf{Corollary}}
\newtheorem{lemma}{\textbf{Lemma}}
\renewcommand{\geq}{\geqslant}
\renewcommand{\leq}{\leqslant}
\newcommand{\tendsto}{\rightarrow}
\newcommand{\abs}[1]{\left\lvert{#1}\right\rvert}
\newcommand{\norm}[1]{\left\lVert#1\right\rVert}
\newcommand{\pmat}[1]{\begin{pmatrix}#1\end{pmatrix}}
\newenvironment{proof}{\paragraph{Proof:}}{\hfill$\square$}
\newcommand{\R}{\mathbb{R}}
\newcommand{\N}{\mathbb{N}}
\newcommand{\is}{i_{s}}
\newcommand{\iu}{i_{u}}
\newcommand{\Nsets}{\mathcal{Q}}
\newcommand{\gdash}{\mathcal{G}^{\prime}}
\newcommand{\vbar}{\overline{v}}
\newcommand{\Vprime}{\mathcal{V}^{\prime}}
\newcommand{\Edash}{\mathcal{E}^{\prime}}
\newcommand{\G}{\mathcal{G}(\mathcal{V}, \mathcal{E})}
\newcommand{\V}{\mathcal{V}}
\newcommand{\E}{\mathcal{E}}
\newcommand{\walk}{v_0, (v_0, v_1), v_1, \ldots, v_{n-1}, (v_{n-1}, v_0), v_0}
\newcommand{\cycle}{\overline{v}_0, (\overline{v}_0, \overline{v}_1), \overline{v}_1, \ldots, \overline{v}_{n-1}, (\overline{v}_{n-1}, \overline{v}_0), \overline{v}_0}
\newcommand{\wbar}{\overline{w}}
\newcommand{\wunder}{\underline{w}}
\DeclareMathOperator{\minimize}{minimize}
\DeclareMathOperator{\sbjto}{subject\;to}      
\begin{document}
\begin{frontmatter}

\title{Stabilizing scheduling logic for networked control systems under limited capacity and lossy communication networks} 


\author[First]{Anubhab Dasgupta} 

\address[First]{Department of Mechanical Engineering, Indian Institute of Technology Kharagpur, West Bengal 721302, India (email : anubhab.dasgupta@kgpian.iitkgp.ac.in).}

\begin{abstract}                
In this paper we address the problem of designing scheduling logic for stabilizing Networked Control Systems (NCSs) with plants and controllers remotely-located over a limited capacity communication network subject to data losses. Our specific contributions include characterization of stability under worst case data loss using an inequality associated with a cycle on a graph. This is eventually formulated as a feasibility problem to solve for certain parameters (\(T\)-factors) used to design a periodic scheduling logic. We show that given a solution to the feasibility problem, the designed scheduling logic guarantees \emph{global asymptotic stability} for all plants of the network under all admissible data losses. We also derive sufficient conditions on the number of plants and the capacity of the network for the existence of a solution to the feasibility problem. Given that a sufficient condition is satisfied, we discuss the procedure to obtain the feasible \(T\)-factors. We use tools from switched systems theory and graph theory in this work. A numerical experiment is provided to verify our results.
\end{abstract}


\end{frontmatter}

\section{Introduction}
\label{s:intro}
Networked Control Systems (NCSs) are spatially distributed systems in which the communication between plants and their controllers takes place over a shared communication network. They are omnipresent in modern-day Cyber-Physical System applications including that in a platoon of vehicles, in an industrial assembly line, and even in medical applications like remote surgery. In most of these applications, the capacity of the communication network is constrained. A scenario where the number of plants sharing a network is higher than the capacity of the network is known as a \emph{medium access constraint}. The problem of allocating the shared network to the plants at each time instant, is referred to as a \emph{scheduling problem} and the corresponding logic obtained is known as a \emph{scheduling logic}. These logics generally fall into two main categories: static (periodic) and dynamic (closed-loop) scheduling. In the case of static scheduling, an offline, finite-length allocation scheme for the shared communication channel is applied in a periodic manner. In this paper we design a static scheduling logic. See \citep{ref:zhang-survey-2016} and the references therein for a survey of recent results in NCS.

 The problem of designing scheduling policies has been widely studied, see for example \citep{Hristu-Varsakelis2005}, \citep{7396942}, \citep{7479071} and the references therein, where the authors have used tools such as common lyapunov functions, multiple lyapunov functions and linear matrix inequalities. Additionally, communication networks may experience uncertainties, such as data loss, which is typical in noisy networks. The study of how NCSs are affected under uncertain communication has recently received considerable attention: both in the case of attacks \citep{heemels-dos-attacks} and in the case of data losses \citep{KUNDU2022151}. In the recent work \citep{icc-23}, the authors consider Bernoulli packet-drops and provide BMI conditions to design a probabilistic scheduling logic for preserving the GAS of an NCS.  \\  
 We consider \(N\) discrete-time linear plants with controllers connected over a shared network that has a capacity of \(M(<N)\). We consider a data loss model wherein at a time instant some or all of the active plants in the network can be affected resulting in open-loop operation. However, the data losses are such that they can occur consecutively for at most \(\ell\) time instants, where \(\ell\) is pre-specified. A data loss acknowledgement signal or channel feedback affects the optimal control or estimation strategy. We solve our problem under the assumption that there is no such acknowledgement signal available. In \citep{dey-schenato}, the authors solve a state estimation problem for a single plant affected by quantization noise and an erasure channel. They show that the optimal strategy is to forward an innovation term through the channel when acknowledgement is available and just the state measurement otherwise. In our setting, we design a scheduling logic which guarantees the \emph{GAS} of all the plants in the network under all admissible data loss patterns. The challenge in the absence of data loss acknowledgement is that using a particular scheduling logic cannot guarantee the exact duration for which the plants actually operated in closed loop. \citep{atreyee-tcns20} has utilised the idea of identifying \emph{\(T\)-contractive cycles} for studying the design of stabilizing scheduling logic for NCSs assuming the communication to be ideal. Motivated by similar tools, we use a switched systems representation of our NCS and represent the operation of activating plants as traversing over a graph. To guarantee stability under data losses, we analyse the worst-case scenario with maximal admissible data loss and characterize GAS in such a case using \emph{contractive cycles} which are parameterized by \(T\)-factors. These \(T\)-factors are used to design the scheduling logic. Although a similar graph theoretic approach using \(T\)-contractive cycles was used in \citep{atreyee-tcns20}, the novelty in our work lies in the fact that we consider the problem in the presence of data losses. Our contribution in this paper is three-fold: (a) we characterize stability under the worst case using what we define as a  \emph{contractive cycle} on a graph. This cycle is parameterized by \(T\)-factors, the existence of which gives us such a \emph{contractive cycle}. We derive the characterization of this cycle, formulate it as a feasibility problem and provide an algorithm to design such a cycle and obtain the \(T\)-factors. (b) We propose an algorithm for designing a stabilizing scheduling logic without utilising a data loss acknowledgement signal. The scheduling logic uses the computed \(T\)-factors to compute the duration of activation for plants specified by the designed \emph{contractive cycle} and does so repeatedly. The scheduling logic being static with an offline computation, makes it suitable for safety-critical applications. (c) We first derive a general sufficient condition guaranteeing the existence of a \emph{contractive cycle} using a network partition based approach focussed towards dividing the feasibility problem into ones with smaller number of constraints. Followed by that we provide more specific conditions on \(N\) and \(M\) which guarantee solution to the feasibility problem and discuss their respective constructions. \\
 The remainder of the paper is organized as follows: in section \ref{s:prob_stat} we formally state our problem, section \ref{s:preliminaries} introduces the switched systems and graph theoretic representation of the problem which is followed by our main results, including algorithms and sufficient conditions, presented in section \ref{s:mainres}. We provide numerical experiment results in section \ref{s:numex} and conclude in section \ref{s:concln}. Detailed proofs of some of our results are provided in section \ref{s:proofs}. \\
{\it Notation}. We employ standard notation throughout the paper. \(\R\) is the set of real numbers and \(\N\) is the set of natural numbers. \(\N_0 \coloneqq \N \cup \{0\}\). \(\lambda_{\text{max}}(\mathcal{R})\) and \(\lambda_{\text{min}}(\mathcal{R})\) are used to denote the maximum and minimum eigenvalues of a matrix \(\mathcal{R} \in \R^{r\times r}\), respectively. \(a\%b\) denotes the remainder of the operation \(a/b\), for two scalars \(a\) and \(b\). \(\norm{v}\) denotes the Euclidean norm of a vector \(v\). \(F\) is called a \(b\)-vector with \(b \in \N\) if \(F \in \R^b\). For a finite set \(C\), its cardinality is denoted by \(\abs{C}\). \(\mathbbm{1}_{y}(y \in Y)\) is the indicator function which equals \(1\) when \(y \in Y\) and equals \(0\) when \(y \notin Y\). 
\section{Problem statement}
\label{s:prob_stat}
    We consider an NCS with \(N\) discrete-time linear plants whose dynamics are given by:
    \begin{align}
        \label{e:plants}
        x_i(t+1) = A_i x_i(t) + B_i u_i(t), \quad x_i(0) = x_i^0, \quad t \in \N_0,
    \end{align}
    where \(x_i(t) \in \R^{d_i}\) and \(u_i(t) \in \R^{m_i}\) are the vectors of states and inputs of the \(i\)-th plant at time \(t\) respectively. Each plant \(i\) has access to a remotely located state-feedback controller given by \(u_i(t) = K_i x_i(t)\), \(i=1,2,\ldots,N\). The matrices \(A_i \in \R^{d_i \times d_i}\), \(B_i \in \R^{d_i \times m_i}\) and \(K_i \in \R^{m_i \times d_i}\), \(i = 1, 2, \ldots, N\) are constant. Thus, each of the \(N\) plant-controller pairs communicate over a forward and reverse channel. We assume that the open-loop dynamics of each plant is unstable and each controller is stabilizing. More specifically, the matrices \(A_i+B_iK_i\), \(i=1,2,\ldots,N\) are Schur stable and the matrices \(A_i\), \(i = 1, 2, \ldots, N\) are unstable, that is, they are not Schur stable.\footnote{A matrix \(A \in \R^{d_i \times d_i}\) is Schur stable if all its eigenvalues lie within the open unit disk and it is non-Schur otherwise}
    
    The plants communicate with their respective controllers over a shared communication network with the following properties:
    \begin{itemize}[leftmargin=*]
        \item it has a limited communication capacity in the sense that at any time instant only \(M\) plants \((0<M<N)\) can access the network. Consequently the remaining \(N-M\) plants operate in open-loop, that is, with \(u_i(t)=0\),
        \item all the forward channels (from the plant to the controller) are ideal with perfect communication. The reverse channels (controller to plant) are degraded in the sense that they are prone to data losses wherein in the event of a data loss affecting the communication network at a time instant \(t\), the control signals are lost while transmission across some or all of the active communication channels in the network. More specifically, if data loss occurs in a channel \(j \in \{1,2,\ldots,M\}\) at time \(t\), the control input of plant \(i \in \{1,2,\ldots,N\}\) accessing the channel \(j\) at time \(t\) is lost, then the plant \(i\) operates in open-loop, that is, with \(u_i(t)=0\). Moreover, the number of consecutive data loss instants is bounded above by a pre-specified \(\ell \in \N\). 
    \end{itemize}
    Let \(\kappa_j \colon \N_0 \rightarrow \{0,1\}\) denote the data loss signal at channel \(j \in \{1,2,\ldots,M\}\). If \(\kappa_j(t)=0\), then the plant \(i \in \{1,2,\ldots,N\}\) accessing the channel \(j\) at time \(t\) receives its control input, and if \(\kappa_j(t)=1\), then the control input for the plant \(i \in \{1,2,\ldots,N\}\) accessing the channel \(j\) at time \(t\) is lost in the network. Let \(\mathcal{D}(\ell)\) denote the set of all \(\kappa_j\), \(j=1,2,\ldots,M\) for which the number of consecutive data losses is at most \(\ell\). A data loss signal \(\kappa\) is called \textit{admissible} if it administers data losses such that they can occur consecutively for at most \(\ell\) time instants.
    Notice that under any admissible data loss signal, the control inputs sent through \emph{some or all} of the \(M\) channels being accessed at a time \(t \in \N_0\), may be lost. \\
    We define \(\mathcal{S}\) to be the set of all \(M\)-vectors, from \(\{1,2,\ldots,N\}\) with distinct elements. We call \(\gamma : \N_0 \longrightarrow \mathcal{S}\) to be the \emph{scheduling logic} that determines which \(M\) plants will get access to the network, at each time \(t\), \(t \in \N_0\). The remaining \(N - M\) plants operate in open loop. There exists a diverging sequence of times \(0 \eqqcolon \tau_0 < \tau_1 < \tau_2 < \ldots\) and a sequence of indices \(s_0, s_1, s_2, \ldots\), with \(s_h \in \mathcal{S}\), \(h = 0, 1, 2, \ldots\) such that \(\gamma(t) = s_h\) for \(t \in \left[\tau_h, \tau_{h+1}\right[\).  
    \begin{defin}\citep[Appendix A.1]{liberzonswitching03} 
        \label{defin:gas}
        Plant \(i\) in \eqref{e:plants} is said to be \emph{globally asymptotically stable} (GAS) for a given scheduling logic, \(\gamma\), if there exists a class \(\mathcal{KL}\) function \(\beta_i\) such that the following inequality holds:
        \[
            \norm{x_i(t)} \leq \beta_i(\norm{x_i(0)}, t) \quad \text{ for all \(x_i(0) \in \R^{d_i}\) and \(t \in \N_0\).\footnotemark}
        \]\footnotetext{Recall the classes of functions \(\mathcal{K} \coloneqq \{\psi \colon \left[0, + \infty\right[ \rightarrow \left[0, + \infty\right[ | \psi \text{ is continuous, strictly increasing and } \psi(0) = 0\}\), \(\mathcal{L} \coloneqq \{\phi \colon \left[0, + \infty\right[ \rightarrow \left[0, + \infty\right[ | \phi \text{ is continuous and } \phi(s) \searrow 0 \text{ as } s \nearrow + \infty \}\), and \(\mathcal{KL} \coloneqq \{\Lambda \colon \left[0, + \infty\right[^2 \rightarrow \left[0, + \infty\right[ | \Lambda(.,s) \in \mathcal{K} \text{ for each \(s\) and } \Lambda(r,.) \in \mathcal{L} \text{ for each }r\}\).}
    \end{defin}
    We study the conditions under which all the plants are globally asymptotically stable under any admissible data loss signal. We state our problem formally here.
    \begin{probs}
        \label{prob:main_prob}
        Given the plant dynamics, \((A_i,B_i)\), \(i=1,2,\ldots,N\), the controller dynamics, \(K_i\), \(i=1,2,\ldots,N\), the capacity of the communication network \(M(<N)\) and the number of maximum consecutive data losses, \(\ell\), design a scheduling logic \(\gamma\), under which each plant \(i=1,2,\ldots,N\) in \eqref{e:plants}, is \(\emph{globally asymptotically stable}\) under all admissible data loss signals \(\kappa_j \in \mathcal{D}(\ell)\), \(j=1,2,\ldots,M\).
    \end{probs}

    In the sequel we will refer to a scheduling logic \(\gamma\), that is obtained as a solution to Problem \ref{prob:main_prob} as a \emph{stabilizing scheduling logic}. We will provide a solution to Problem \ref{prob:main_prob} assuming that there is no data loss acknowledgement signal or channel feedback.

\section{Preliminaries}
\label{s:preliminaries}
    Similar to \citep{atreyee-tcns20} we model the plants of the NCS as switched systems and provide a graph theoretic representation of the same. 
    \subsection{\textbf{Switched system modelling}}
    The dynamics of the \(i\)-th plant, \(i=1,2,\ldots,N\) in \eqref{e:plants} can be expressed as a switched system \citep{liberzonswitching03}:
    \begin{align}
        \label{e:switched_system}
        x_i(t+1) = A_{\sigma_i(t)}x_i(t), \; x_i(0) = x_i^0, \; \sigma_i(t) \in \{\is, \iu\}, \; 
        t \in \N_0 
    \end{align}
    where the subsystems are \(\{A_{\is}, A_{\iu}\}\), and a switching logic \(\sigma_i:\N_0 \longrightarrow \{\is, \iu\}\), \(i=1,2,\ldots,N\) satisfies:
    \begin{align*}
        \sigma_i(t) = 
        \begin{cases}
            \is, & \text{if } \is \in \gamma(t) \text{ and } \kappa_i(t) \neq 1 \\
            \iu, & \text{if } \is \notin \gamma(t) \\
                 & \text{or when } \is \in \gamma(t) \text{ and } \kappa_i(t) = 1 
            \end{cases}
    \end{align*}
    The switching logic \(\sigma_i\), for each \(i = 1, 2, \ldots, N\), depends on the scheduling logic \(\gamma\) and the data loss signal \(\kappa\). Therefore, it suffices to demonstrate that, under a stabilizing scheduling logic and any allowable data loss signal \(\kappa \in \mathcal{D}(\ell)\), the logic \(\sigma_i\) guarantees global asymptotic stability (GAS) for all plants \(i = 1, 2, \ldots, N\). We summarize relevant properties of Lyapunov-like functions from recent studies for context.
    \begin{facts} \citep[Fact 1]{atreyee-hscc14}
        \label{facts:lyapunov-like-1}
        There exists pairs \((P_p, \lambda_p)\), \(P_p \in \R^{d_i \times d_i}\) where \(P_p\) are positive definite matrices and \(p \in \{\is, \iu\}\), with scalars \(0 < \lambda_{\is} < 1\) and \(\lambda_{\iu} > 1\), for every plant \(i=1,2,\ldots,N\) in \eqref{e:plants}, such that with:
        \begin{align}
            \label{e:lyapunov-like}
            \R^{d_i} \ni \zeta \mapsto V_p(\zeta) \coloneqq \langle P_p \zeta, \zeta \rangle \in \left[0, +\infty\right[            
        \end{align}
        we have the following inequality:
        \begin{align}
            \label{e:lyapunov-lambda}
            V_p(z_p(t+1)) \leq \lambda_p V_p(z_p(t)), \quad t \in \N_0
        \end{align}
        where \(z_p(\cdot)\) solves the \(p\)-th recursion in \eqref{e:switched_system}, \(p \in \{\is, \iu\}\).
    \end{facts}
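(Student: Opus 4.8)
The plan is to establish the two inequalities subsystem-by-subsystem, reducing each to a discrete-time Lyapunov equation via a spectral scaling argument. Fix a plant \(i\) and a mode \(p \in \{\is, \iu\}\), and write \(A_p\) for the corresponding subsystem matrix. Since \(V_p(\zeta) = \inprod{P_p \zeta}{\zeta}\) and \(z_p(t+1) = A_p z_p(t)\), the claimed bound \eqref{e:lyapunov-lambda} is equivalent to the matrix inequality
\[
    A_p^{\top} P_p A_p \preceq \lambda_p P_p .
\]
Indeed, \(V_p(A_p \zeta) = \inprod{A_p^{\top} P_p A_p \zeta}{\zeta}\), so the display applied at \(\zeta = z_p(t)\) yields \eqref{e:lyapunov-lambda}. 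Thus it suffices to produce, for each mode, a positive definite \(P_p\) and an admissible scalar \(\lambda_p\) satisfying this inequality.

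The key tool is the following scaling observation. For any \(A \in \R^{d_i \times d_i}\) and any scalar \(\lambda > \rho(A)^2\), where \(\rho(\cdot)\) denotes the spectral radius, the rescaled matrix \(\widetilde{A} \Let A / \sqrt{\lambda}\) satisfies \(\rho(\widetilde{A}) = \rho(A)/\sqrt{\lambda} < 1\), hence is Schur stable. The classical discrete-time Lyapunov theorem then guarantees, for any chosen \(Q \succ 0\), a unique solution \(P \succ 0\) of \(\widetilde{A}^{\top} P \widetilde{A} - P = -Q\); substituting \(\widetilde{A} = A/\sqrt{\lambda}\) and multiplying through by \(\lambda\) gives \(A^{\top} P A = \lambda (P - Q) \prec \lambda P\). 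So for every \(\lambda > \rho(A)^2\) there is a positive definite \(P\) with \(A^{\top} P A \prec \lambda P\).

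It remains to choose \(\lambda_p\) in the correct range for each mode. For the stabilized mode \(\is\), the matrix \(A_{\is}\) (corresponding to the Schur-stable closed loop \(A_i + B_i K_i\)) has \(\rho(A_{\is}) < 1\), so the interval \((\rho(A_{\is})^2, 1)\) is nonempty; picking any \(\lambda_{\is}\) in it and applying the observation produces \(P_{\is} \succ 0\) with \(A_{\is}^{\top} P_{\is} A_{\is} \preceq \lambda_{\is} P_{\is}\) and \(0 < \lambda_{\is} < 1\), as required. For the open-loop mode \(\iu\), the matrix \(A_{\iu} = A_i\) is non-Schur, so \(\rho(A_{\iu}) \geq 1\); choosing any \(\lambda_{\iu} > \rho(A_{\iu})^2\) forces \(\lambda_{\iu} > 1\) automatically, and the observation yields \(P_{\iu} \succ 0\) with \(A_{\iu}^{\top} P_{\iu} A_{\iu} \preceq \lambda_{\iu} P_{\iu}\). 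Taking the pairs \((P_{\is}, \lambda_{\is})\) and \((P_{\iu}, \lambda_{\iu})\) so constructed, and repeating for each plant \(i = 1, 2, \ldots, N\), establishes the claim.

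The argument is essentially standard, so I do not anticipate a genuine obstacle; the only point requiring care is the directional role of \(\lambda_p\). The same scaling device must simultaneously deliver a contraction factor strictly below \(1\) in the stable mode and a growth factor strictly above \(1\) in the unstable mode, and this bifurcation hinges entirely on the spectral hypotheses — Schur stability of \(A_i + B_i K_i\) guaranteeing \(\rho < 1\), and the non-Schur assumption on \(A_i\) guaranteeing \(\rho \geq 1\) — which is exactly why both standing assumptions on the plant are needed.
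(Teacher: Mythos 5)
Your argument is correct: reducing \eqref{e:lyapunov-lambda} to the matrix inequality \(A_p^{\top}P_pA_p \preceq \lambda_p P_p\) and solving it via the spectral rescaling \(A_p/\sqrt{\lambda_p}\) plus the discrete-time Lyapunov equation is exactly the standard route, and the mode-dependent sign of \(\ln\lambda_p\) does follow from the Schur/non-Schur hypotheses as you say. The paper itself does not prove this Fact --- it is imported by citation --- but the grid-based procedure it describes for computing the scalars (testing whether \(\eta_j A_{\iu}\) is Schur and then admitting \(\lambda_{\iu}=1/\eta_j^2\)) is precisely your scaling observation in computational form, so your proof matches the approach the paper relies on.
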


    \begin{facts} \citep[Fact 2]{atreyee-hscc14}
        \label{facts:lyapunov-like-2}
        There exists scalars \(\mu_{pq} \geq 1\), \(p, q \in \{\is, \iu\}\), for every plant \(i=1,2,\ldots,N\) such that the following inequality holds:
        \begin{align}
            \label{e:lyapunov-mu}
            V_q(\zeta) \leq \mu_{pq}V_p(\zeta), \quad \zeta \in \R^{d_i}
        \end{align}
    \end{facts}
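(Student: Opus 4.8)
The plan is to exploit the positive definiteness of the matrices \(P_p\), \(p \in \{\is, \iu\}\), guaranteed by Fact \ref{facts:lyapunov-like-1}, which makes each \(V_p\) a coercive quadratic form comparable to the squared Euclidean norm. Fix a plant \(i\) and an ordered pair \(p, q \in \{\is, \iu\}\). Since \(P_p\) and \(P_q\) are symmetric and positive definite, the spectral theorem yields the two-sided eigenvalue bounds \(\lambda_{\text{min}}(P_p)\norm{\zeta}^2 \leq V_p(\zeta) \leq \lambda_{\text{max}}(P_p)\norm{\zeta}^2\) for every \(\zeta \in \R^{d_i}\), and the analogous bounds for \(q\). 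These are the only ingredients I would need.

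First I would dispose of the case \(\zeta = 0\), for which both sides of \eqref{e:lyapunov-mu} vanish. For \(\zeta \neq 0\), I would combine the upper bound on \(V_q\) with the lower bound on \(V_p\) to get \(V_q(\zeta) \leq \lambda_{\text{max}}(P_q)\norm{\zeta}^2 \leq \frac{\lambda_{\text{max}}(P_q)}{\lambda_{\text{min}}(P_p)} V_p(\zeta)\), where the denominator is strictly positive because \(P_p\) is positive definite. Setting \(\mu_{pq} \coloneqq \max\left(1, \frac{\lambda_{\text{max}}(P_q)}{\lambda_{\text{min}}(P_p)}\right)\) then delivers the claimed inequality while enforcing \(\mu_{pq} \geq 1\).

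A sharper constant, and the one I would ultimately record, comes from reading this as a generalized eigenvalue problem: the least admissible ratio is the smallest \(\mu\) with \(\mu P_p - P_q \succeq 0\). Applying the congruence transformation \(\zeta \mapsto P_p^{1/2}\zeta\) reduces this to \(\mu I \succeq P_p^{-1/2} P_q P_p^{-1/2}\), so the optimal choice is \(\mu_{pq} = \max\left(1, \lambda_{\text{max}}(P_p^{-1/2} P_q P_p^{-1/2})\right) = \max\left(1, \lambda_{\text{max}}(P_p^{-1} P_q)\right)\). Uniformity over all \(\zeta\) is automatic, since the ratio \(V_q(\zeta)/V_p(\zeta)\) is scale-invariant: it suffices to bound it on the unit sphere \(\{\zeta : \norm{\zeta} = 1\}\), which is compact and on which \(V_p\) is continuous and strictly positive, so the supremum is attained and finite.

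I do not anticipate a genuine obstacle, as the statement is a routine consequence of positive definiteness and compactness. The only point needing a little care is guaranteeing \(\mu_{pq} \geq 1\) for every ordered pair \((p,q)\), including the mixed pairs where the raw ratio could dip below one; taking the maximum with \(1\) resolves this without loss, since enlarging \(\mu_{pq}\) only weakens \eqref{e:lyapunov-mu}.
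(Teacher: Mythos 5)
Your proof is correct; the paper does not actually prove this Fact (it imports it by citation from \citep[Fact 2]{atreyee-hscc14}), but your argument via the two-sided eigenvalue bounds on the positive definite matrices \(P_p\) is the standard one underlying that reference. Your sharp constant \(\lambda_{\text{max}}(P_p^{-1/2}P_qP_p^{-1/2}) = \lambda_{\text{max}}(P_qP_p^{-1})\) coincides exactly with the tight estimate the paper quotes from \citep[Proposition 1]{atreyee-hscc14}, and your safeguard of taking the maximum with \(1\) correctly handles the mixed pairs where the raw ratio could fall below one.
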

    The functions \(V_p\), \(p \in \{\is, \iu\}\), \(i=1,2,\ldots,N\) are called Lyapunov-like functions. From the definition of \(V_p\), they are linearly comparable. Lyapunov-like functions are widely used in the stability theory of switched and hybrid systems \citep{liberzonswitching03}, \citep{branicky-98}. The scalars \(\lambda_{\is}\), \(i=1,2,\ldots,N\) provide a measure of the contraction (stability) of the corresponding \(V_p\) when a plant is operating in one of the stable mode(s), while \(\lambda_{\iu}\) is associated with the expansion (instability) in the corresponding \(V_p\) when a plant is in the unstable mode of operation. A tight estimate of the scalars \(\mu_{pq}\), \(p,q \in \{\is, \iu\}\) was proposed as \(\lambda_{\text{max}}\left(P_q P_p^{-1}\right)\) \citep[Proposition 1]{atreyee-hscc14}. See \citep[Remark 6]{atreyee-tcns20} for a detailed discussion about estimating these scalars.

    \subsection{\textbf{Graph theoretic representation of the NCS}}
    A directed graph \(\G\) is defined with vertices connected by directed edges. The vertex set \(\V\) has \(\abs{\V} = {N \choose M}\) vertices, each uniquely labeled as \(L(v) = \{\ell_{v}(1), \ell_{v}(2), \ldots, \ell_{v}(N)\}\), where \(\ell_{v}(i) = \is\) for exactly \(M\) elements and \(\ell_{v}(i) = \iu\) for the remaining \(N - M\) elements. The edge set \(\E\) consists of directed edges \((u, v)\) between each distinct pair of vertices \(u, v \in \V\).
    The reader can refer to \citep{bollobas} for the definitions of a walk, closed walk and a cycle on a graph.\\
    Let \(c = \walk\) be a cycle on \(\G\). 
    Let the functions \(\wbar \colon \V \rightarrow \R^N\) and \(\wunder \colon \E \rightarrow \R^N\) associate weights to the vertices and edges of \(\G\) respectively. We define:
    
    {\small \begin{align*}
        \wbar(v) = \pmat{\wbar_1(v) \\ \wbar_2(v) \\ \vdots \\ \wbar_N(v)}, \quad v \in \V
    \end{align*}
    where,
    \begin{align}
        \label{e:vert_weights}
        \wbar_i(v) =
        \begin{cases}
             -\abs{\ln{\lambda_{\is}}}, & \text{ if } \ell_v(i) = \is, \\
             &\quad i=1,2,\ldots,N \\
              \quad\abs{\ln{\lambda_{\iu}}}, & \text{ if } \ell_v(i) = \iu 
        \end{cases}
    \end{align}
    and,
    \begin{align*}
        \wunder(u,v) = \pmat{\wunder_1(u,v) \\ \wunder_2(u,v) \\ \vdots \\ \wunder_N(u,v)}, \quad (u,v) \in \E 
    \end{align*}
    where,
    \begin{align}
        \label{e:edge_weights}
        \wunder_i(u,v) =
        \begin{cases}
             \ln{\mu_{\is \iu}}, & \text{ if } \ell_u(i) = \is, \ell_v(i) = \iu, \\
             \ln{\mu_{\iu \is}}, & \text{ if } \ell_u(i) = \iu, \ell_v(i) = \is, \\
             &\quad i=1,2,\ldots,N \\
             0, & \text{ otherwise }
        \end{cases}
    \end{align}
    }
    
    The label \(L(v)\) for each vertex \(v\) identifies the specific \(M\) plants that have access to the communication network (or are in closed-loop operation). When the scheduling logic activates vertex \(v\) at time \(t\), the \(i\)-th plant operates in a stable mode if \(\ell_v(i) = \is\) and in an unstable, open-loop mode if \(\ell_v(i) = \iu\). A directed edge from vertex \(u\) to \(v\) represents a transition in network access from the set of \(M\) plants designated by \(u\) to those designated by \(v\).
    \begin{defin} 
        \label{defin:contractive-sol-2}
        A cycle \(c=\walk\) on \(\G\) is called a \emph{contractive cycle} if there exists at least one set of positive integers \(T_{v_k}\), \(k=0,1,\ldots,n-1\), \(2\leq n \leq \abs{\V}\) such that the following inequality is satisfied for all \(i=1,2,\ldots,N\):
        \begin{align}
            \label{e:contractive-sol-2}
            \overline{Z}_i(c) &\coloneqq \left(\sum\limits_{k=0}^{n-1}\left(\mathbbm{1}_{\ell_{v_k}(i)}\left(\ell_{v_k}(i)=\is\right)\wbar_i(v_k) \right. \right. \notag \\ 
            &\quad \quad \qquad \left.+ \mathbbm{1}_{\ell_{v_k}(i)}\left(\ell_{v_k}(i)=\iu\right)\wbar_i(v_k)\times(\ell + 1) \right) \notag \\
            &\quad \qquad + \left.\left(\ln{\mu_{\ell_{v_k}(i)\iu}}+\ln{\mu_{\iu\ell_{v_k}(i)}}\right)\right)T_{v_k}<0.
        \end{align}
        where \(\wbar_i(v_k)\) is the \(i\)-th component of the vector \(\wbar(v_k)\). The positive integers \(T_{v_k}\), \(k=0,1,\ldots,n-1\), will be referred to as the \(T\)-factors associated with the cycle \(c\). 
    \end{defin}
    We move on to our main results.
\section{Main results}
\label{s:mainres}
    Our main result can be stated as the following theorem:
    \begin{thrm} 
        \label{thrm:main_res-sol-2}
        Consider the NCS with limited capacity under data losses as described in section \ref{s:prob_stat}. Let the matrices \(A_i\), \(B_i\), \(K_{i}\) be given for \(i=1,2,\ldots,N\) along with the constants \(M\) and \(\ell\). If there exists a cycle satisfying the condition in \eqref{e:contractive-sol-2}, then there exists a scheduling logic \((\gamma(t))_{t\geq 0}\) which preserves GAS of each plant in \eqref{e:plants} under all admissible data loss signals \(\kappa_j \in \mathcal{D}(\ell)\), \(j=1,2,\ldots,M\). Moreover, the stabilizing scheduling logic \(\gamma\), can be obtained using Algorithm \ref{algo:scheduling_policies-sol-2}. 
    \end{thrm}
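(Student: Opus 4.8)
The plan is to show that a contractive cycle yields a periodic scheduling logic under which every plant is GAS, by estimating the growth of each Lyapunov-like function $V_p$ over one period of the schedule and proving it contracts in the worst case of maximal data loss. First I would construct the schedule explicitly: given the contractive cycle $c = \walk$ with $T$-factors $T_{v_k}$, I activate vertex $v_k$ for a duration of $(\ell+1)T_{v_k}$ time instants before transitioning to $v_{k+1}$, and then repeat this periodically. The inflation factor $(\ell+1)$ is the crucial design choice: since data losses can occur consecutively for at most $\ell$ instants, holding each vertex for $(\ell+1)T_{v_k}$ steps guarantees that, regardless of the admissible loss signal, each scheduled plant genuinely operates in closed loop (stable mode) for at least $T_{v_k}$ of those instants, while in the worst case the remaining $\ell\cdot T_{v_k}$ instants are spent in the unstable open-loop mode.

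Next I would track, for a fixed plant $i$, how its Lyapunov-like function evolves across one full traversal of the cycle. Over the block where vertex $v_k$ is active, I split the contribution into the scheduled-stable portion and the worst-case-unstable portion. Using Fact~\ref{facts:lyapunov-like-1}, every stable step multiplies $V_{\is}$ by at most $\lambda_{\is}$ and every unstable step multiplies $V_{\iu}$ by at most $\lambda_{\iu}$; using Fact~\ref{facts:lyapunov-like-2}, each switch between the stable and unstable modes (both the loss-induced switch to open loop and the recovery switch back) incurs a multiplicative jump of at most $\mu_{\is\iu}$ or $\mu_{\iu\is}$. Taking logarithms converts these products into the sums appearing in the vertex and edge weights of Definition~\ref{defin:contractive-sol-2}: the factor $\abs{\ln\lambda_{\iu}}\times(\ell+1)$ bounds the worst-case open-loop expansion over a block, the term $-\abs{\ln\lambda_{\is}}$ captures the guaranteed stable contraction, and the $\ln\mu_{\ell_{v_k}(i)\iu}+\ln\mu_{\iu\ell_{v_k}(i)}$ term accounts for the two mode switches per block. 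The net log-growth of $V$ over the whole cycle is then exactly $\overline{Z}_i(c)$ from \eqref{e:contractive-sol-2}, which the contractive-cycle hypothesis forces to be strictly negative for every $i$.

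From here I would invoke a standard switched-systems argument. Strict negativity of $\overline{Z}_i(c)$ means there is a uniform contraction factor $\rho_i \in (0,1)$ such that $V$ (along the worst-case trajectory, which dominates every admissible one) decreases by a factor of at least $\rho_i$ over each period; iterating over the repeated cycle gives geometric decay of $V_p(x_i(t))$ along the switching instants. Since the $V_p$ are quadratic and positive definite, they are equivalent to $\norm{x_i(\cdot)}^2$ up to the eigenvalue ratios $\lambda_{\min}(P_p)$ and $\lambda_{\max}(P_p)$, and the intra-period growth is bounded by a finite constant; combining these bounds produces the required class-$\mathcal{KL}$ estimate $\norm{x_i(t)}\le\beta_i(\norm{x_i(0)},t)$, establishing GAS for each plant under \emph{all} admissible loss signals, not merely the worst case, because the worst-case bound dominates.

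The main obstacle I anticipate is making the worst-case reduction fully rigorous in the absence of a data-loss acknowledgement: I must argue that among all admissible $\kappa\in\mathcal{D}(\ell)$ the growth of each $V$ is maximized by the pattern that inflicts the maximal $\ell$ consecutive losses as early as possible in each block, and that this worst pattern is precisely what the $(\ell+1)$-factor and the weight definitions encode. The subtlety is that a loss can hit some but not all active channels and that the scheduler cannot detect losses, so I need to show that any realized switching sequence $\sigma_i$ consistent with $\gamma$ and an admissible $\kappa$ produces a log-growth bounded above by $\overline{Z}_i(c)$ per period; this amounts to verifying that every stable step the worst case forfeits is compensated by the pessimistic $(\ell+1)$ count and the per-block mode-switch penalties, independently across plants.
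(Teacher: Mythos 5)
Your proposal follows essentially the same route as the paper's proof: the same \((\ell+1)\)-inflated dwell times (Algorithm~\ref{algo:scheduling_policies-sol-2}), the same box-principle accounting of guaranteed closed-loop versus worst-case open-loop instants per block (the paper's Lemmas~\ref{lemma:t-sol-2} and~\ref{lemma:total-time-sol-2}), the same identification of the per-cycle log-growth of the Lyapunov-like functions with \(\overline{Z}_i(c)<0\), and the same passage to a class-\(\mathcal{KL}\) bound via the eigenvalue ratios of the \(P_p\) together with a uniformly bounded contribution from the incomplete final cycle. The worst-case reduction you flag as the main remaining obstacle is treated in the paper exactly as you anticipate --- by bounding the accumulated stable/unstable durations and mode-switch counts of any admissible \(\kappa\) over each completed cycle by the terms of \(\overline{Z}_i(c)\) --- so there is no substantive difference in approach.
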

    The proof of the above theorem is provided in section \ref{s:proofs}. \\
    In Algorithm \ref{algo:scheduling_policies-sol-2} we periodically activate the set of \(M\) plants specified by a vertex \(v_k\), \(k=0,1,\ldots,n-1\) for \(T_{v_k}(\ell + 1)\) consecutive time instants and the value \(\gamma\) at any instant is given by the set of active plants according to the vertex of the cycle. Algorithm \ref{algo:scheduling_policies-sol-2} is similar in principle to \citep[Algorithm 1]{atreyee-tcns20}. In our approach, rather than activating the plants associated with each vertex of the contractive cycle for exactly its \(T\)-factor duration, we extend the activation period to \(\ell + 1\) times the \(T\)-factor. The rationale for this adjustment will become evident in the lemmas presented next.

    The following technical lemmas reveal some useful extremal properties of the cycle which the scheduling logic \(\gamma\) uses. Further, these will be used in the proof of Theorem \ref{thrm:main_res-sol-2}.
    \begin{lemma}
        \label{lemma:t-sol-2}
        If a scheduling logic \(\gamma\) is such that it is static and activates a particular set of \(M\) channels or equivalently a specific vertex \(v\) for an interval of \(T_{v}(\ell + 1)\times \mathcal{K}\), where \(\mathcal{K}\) is a positive integer, then the minimum time in the entire interval for which the vertex \(v\) operates without any data losses is \(T_{v} \times \mathcal{K}\) and the maximum duration in the said interval for which it operates without data losses is \(T_{v}(\ell + 1)\times \mathcal{K}\).
    \end{lemma}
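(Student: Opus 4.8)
The plan is to treat this as a purely combinatorial counting argument driven by the single structural fact that an admissible data loss signal can have at most \(\ell\) consecutive loss instants. Throughout, I would regard each of the \(T_v(\ell + 1)\mathcal{K}\) time instants in the activation interval as being either \emph{clean} (no data loss, so each of the \(M\) plants selected by \(v\) genuinely operates in its stable mode) or \emph{lossy}, and the quantity of interest is the total number of clean instants. The goal is to show that, as the admissible signal varies, this number ranges exactly over the interval from \(T_v\mathcal{K}\) to \(T_v(\ell + 1)\mathcal{K}\).

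For the maximum, I would observe that the identically-zero loss signal is admissible, since it has zero consecutive losses, which is trivially at most \(\ell\); under it every one of the \(T_v(\ell + 1)\mathcal{K}\) instants is clean. Because the interval contains no more instants than this, no admissible signal can produce more clean instants, so the maximum number of clean instants is exactly \(T_v(\ell + 1)\mathcal{K}\).

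For the minimum I would first establish the lower bound by a block/pigeonhole argument. Partition the interval into \(T_v\mathcal{K}\) consecutive sub-blocks each of length \(\ell + 1\), which is possible because \(\ell + 1\) divides the total length \(T_v(\ell + 1)\mathcal{K}\) evenly. The admissibility constraint forbids any block of \(\ell + 1\) consecutive instants from being entirely lossy, so each of the \(T_v\mathcal{K}\) sub-blocks must contain at least one clean instant; summing over the sub-blocks yields at least \(T_v\mathcal{K}\) clean instants for every admissible signal. To see that this bound is tight, I would exhibit the explicit pattern that places exactly \(\ell\) lossy instants followed by one clean instant in each sub-block. This signal is admissible, since its longest run of losses is exactly \(\ell\) and the clean instant terminating each block prevents runs from chaining across block boundaries, and it produces precisely \(T_v\mathcal{K}\) clean instants. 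Hence the minimum is attained and equals \(T_v\mathcal{K}\).

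The argument is elementary, so there is no genuine technical obstacle; the only point requiring care is the correct translation of the phrase ``at most \(\ell\) consecutive data losses'' into the equivalent window statement ``every \(\ell + 1\) consecutive instants contain a clean instant,'' together with verifying that the extremal pattern respects the constraint at the block boundaries and not merely within each block.
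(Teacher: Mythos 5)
Your proof is correct and is essentially the same argument the paper gives: the paper's one-line appeal to the \emph{Box Principle} is exactly your partition of the interval into \(T_v\mathcal{K}\) blocks of length \(\ell+1\), each of which must contain a clean instant. Your write-up merely makes explicit the extremal patterns (the all-clean signal for the maximum, and the ``\(\ell\) losses then one clean instant'' pattern for tightness of the minimum) that the paper leaves implicit.
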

    \begin{proof}
        Given the nature of any admissible data loss where the data losses do not occur consecutively for more than \(\ell\) time instants, the statement of the above lemma follows from a simple usage of the \emph{Box Principle}.
    \end{proof}
    \begin{lemma}
        \label{lemma:total-time-sol-2}
        A scheduling logic \(\gamma\) obtained from Algorithm \ref{algo:scheduling_policies-sol-2} has the property that it takes exactly \(\sum\limits_{k=0}^{n-1}T_{v_k}(\ell + 1)\) consecutive time instants to complete one full cycle of activating every vertex \(v_k\), \(k=0,1,\ldots,n-1\) of the input cycle \(c\) and the duration of uninterrupted activation \(t_{v_k}\) of a vertex \(v_k\) in the above interval satisfying \(T_{v_k} \leq t_{v_k} \leq T_{v_k}(\ell + 1)\).
    \end{lemma}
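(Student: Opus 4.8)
The plan is to prove the two assertions separately; both follow from the deterministic construction in Algorithm~\ref{algo:scheduling_policies-sol-2} together with the extremal bounds already furnished by Lemma~\ref{lemma:t-sol-2}. The conceptual distinction I would keep in mind throughout is that the \emph{scheduled} activation duration of each vertex is fixed offline and is independent of the data loss signal, whereas the \emph{uninterrupted} closed-loop operating duration $t_{v_k}$ depends on where the losses fall and is therefore only known up to a range. For the first claim, I would record precisely what Algorithm~\ref{algo:scheduling_policies-sol-2} does over one period: it traverses the input cycle $c$ once, holding $\gamma$ fixed at the label of vertex $v_k$ for a block of exactly $T_{v_k}(\ell+1)$ consecutive time instants before advancing to $v_{k+1}$, for $k=0,1,\ldots,n-1$. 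Since this scheduled block length is imposed by the logic and does not depend on $\kappa$, and since the $n$ blocks are consecutive and non-overlapping, the time to complete one full traversal is exactly
\[
    \sum_{k=0}^{n-1} T_{v_k}(\ell+1).
\]

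For the second assertion I would fix a vertex $v_k$ and examine its single activation block of length $T_{v_k}(\ell+1)$ within one period. The quantity $t_{v_k}$ is the number of instants inside this block during which the $M$ plants designated by $v_k$ actually operate in closed loop, that is, are not hit by a data loss. This is exactly the setting of Lemma~\ref{lemma:t-sol-2} with interval length $T_{v_k}(\ell+1)\times\mathcal{K}$ and $\mathcal{K}=1$; that lemma then yields the minimum data-loss-free duration $T_{v_k}$ and the maximum $T_{v_k}(\ell+1)$, whence $T_{v_k}\le t_{v_k}\le T_{v_k}(\ell+1)$.

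The only point that needs care --- and the closest thing to an obstacle --- is the bookkeeping that identifies $t_{v_k}$ with precisely the data-loss-free count bounded in Lemma~\ref{lemma:t-sol-2}, together with the observation that one full cycle isolates each vertex as a single contiguous interval, corresponding to the case $\mathcal{K}=1$. Once this identification is in place, both inequalities are immediate from the Box-Principle argument already used to prove Lemma~\ref{lemma:t-sol-2}, and no tracking of loss windows across repeated periods is required.
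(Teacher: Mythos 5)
Your proposal is correct and follows essentially the same route as the paper, which likewise derives the total period length from the static (data-loss-independent) block structure of Algorithm~\ref{algo:scheduling_policies-sol-2} and obtains the bounds on $t_{v_k}$ by applying Lemma~\ref{lemma:t-sol-2} to each vertex's activation block in turn. Your explicit identification of the case $\mathcal{K}=1$ for a single contiguous block per vertex is just a more detailed spelling-out of what the paper leaves implicit.
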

    \begin{proof}
        The proof of the above lemma is immediate from the static nature of a scheduling logic designed using Algorithm \ref{algo:scheduling_policies-sol-2} and from Lemma \ref{lemma:t-sol-2} applied to each vertex consecutively in order.
    \end{proof}
    \subsection{On designing a contractive cycle}
    Theorem \ref{thrm:main_res-sol-2} provides an algorithm to design the scheduling logic given a contractive cycle as the input which ensures that all the plants follow globally asymptotically stable trajectories under any admissible data loss pattern. In this subsection we look closely into the problem of designing a contractive cycle and finding its associated \(T\)-factors. We solve this in two parts. First we discuss how to compute the scalars \(\lambda_{\is}\), \(\lambda_{\iu}\), \(\mu_{\is \iu}\) and \(\mu_{\iu \is}\) for all \(i=1,2,\ldots,N\) followed by a procedure to choose \emph{candidate} cycles on \(\G\). The second part involves finding the \(T\)-factors and we formulate it as a feasibility problem given a candidate cycle \(c\) on \(\G\) as the input with the associated scalars \(\lambda_{\is}\), \(\lambda_{\iu}\), \(\mu_{\is \iu}\) and \(\mu_{\iu \is}\) for all \(i=1,2,\ldots,N\). The procedure that we follow for the first and the second parts is independent of data losses affecting the network and thus closely follows \citep[Algorithm 2]{atreyee-tcns20}. Although we say that the second part follows a similar route as in \citep[Algorithm 2]{atreyee-tcns20}, all we mean is that the same tools can be used for \emph{solving} the feasibility problem. The actual expression for the constraint for our problem setting is particular to our problem setting of identifying \(T\)-factors which makes the plants \emph{GAS} under all admissible data loss patterns. We briefly describe the procedure here and the reader can refer to \citep[Algorithm 2]{atreyee-tcns20} and the discussions therein for further details. In order for condition \eqref{e:lyapunov-lambda} to be satisfied we need to solve for pairs \((P_p,\lambda_p)\), \(p \in \{\is,\iu\}\) such that 
    \begin{align}
    \label{e:algo-2}
        &A_{\is}P_{\is}A_{\is}^\top - \lambda_{\is}P_{\is} \preceq 0, \;\;\; P_{\is} \succeq 0, \; 0<&\lambda_{\is}<1, \notag \\
        &A_{\iu}P_{\iu}A_{\iu}^\top - \lambda_{\iu}P_{\iu} \preceq 0, \; P_{\iu} \succeq 0, \; &\lambda_{\iu} \geq 1
    \end{align}
    are satisfied simultaneously. The above are BMIs and in general being numerically hard to solve, we use a grid based approach to re-write them as LMIs and use standard LMI solvers. For a particular plant, we divide the interval \((0,1)\) uniformly into a sufficiently large number of partitions with their mid-points denoting possible choice of \(\lambda_{\is}\). Using \citep[Remark 6]{atreyee-tcns20} we once again divide the interval between \(0\) and \(1\) uniformly and now use the partition mid-points (let's call them \(\eta_j\), parameterized by a partition index \(j\)) to check if \(\eta_j A_{\iu}\) is Schur stable and if so we add \(\frac{1}{\eta_j^2}\) as one of the possible values for the scalar \(\lambda_{\iu}\). Now, we iterate over this two dimensional grid of all possible combinations of scalars \(\lambda_{\is}\) and \(\lambda_{\iu}\) to solve for \(P_p\), \(p \in \{\is,\iu\}\) in \eqref{e:algo-2} which now becomes a pair of LMIs due to choosing the values of the respective scalars. If there exists a solution to the simultaneous pair of LMIs the scalars \(\mu_{\is\iu}\) and \(\mu_{\iu\is}\) are chosen as \(\mu_{\is\iu} = \lambda_{\text{max}}\left(P_{\iu} P_{\is}^{-1}\right)\) and \(\mu_{\iu\is} = \lambda_{\text{max}}\left(P_{\is} P_{\iu}^{-1}\right)\) respectively (\citep[Proposition 1]{atreyee-hscc14}). This process is repeated for all plants \(i=1,2,\ldots,N\).
    For the second part, we choose a candidate cycle \(c\) on \(\G\), \(c = \walk\) and solve the following feasibility problem in \(T_{v_k}\), \(k=0,1,\ldots,n-1\) for all \(i=1,2,\ldots,N\):
    \begin{align}
        \label{e:feasibility-T}
        \minimize&\quad1 \notag \\
        \sbjto&\:
        \begin{cases}
            T_{v_k} > 0, \\
            \overline{Z}_i(c) < 0, 
        \end{cases}
    \end{align} 
    where \(\overline{Z}_i(c)\) is as defined in \eqref{e:contractive-sol-2}. \\
    \label{p:para}If there is a solution to the above feasibility problem, we proceed with using the cycle \(c\) to design \(\gamma\) using Algorithm \ref{algo:scheduling_policies-sol-2}.

    \begin{algorithm}[h]
	    \caption{Construction of scheduling logic} 
        \label{algo:scheduling_policies-sol-2}
		\begin{algorithmic}
			\renewcommand{\algorithmicrequire}{\textbf{Input:}}
			\renewcommand{\algorithmicensure}{\textbf{Output:}}
			
			\REQUIRE A \emph{contractive} cycle \(c = \walk\) on \(\G\) with associated \(T\)-factors: \(T_{{v}_k}\), \(k=0,1,\ldots,n-1\).
			\ENSURE A scheduling logic \(\gamma\).
			
			\STATE \textit{Step-I}: For each vertex \({v}_k\), \(k=0,1,\ldots,n-1\), pick the elements \(i\) with label \(\ell_{{v}_k}(i) = \is\), \(i=1,2,\ldots,N\), and construct \(M\)-dimensional vectors \(s_k\), \(k=0,1,\ldots,n-1\)
			    \FOR {\(k=0,1,\ldots,n-1\)}
                    \STATE set \(p = 0\)
                    \FOR {\(i=1,2,\ldots,N\)}
                        \IF {\(\ell_{{v}_k}(i) = \is\)}
                            \STATE set \(p=p+1\) and \(s_k(p) = \is\)
                        \ENDIF
                    \ENDFOR
                \ENDFOR
            \STATE \textit{Step-II}: Construct a scheduling logic \(\gamma\) using the vectors \(s_k\), \(k=0,1,\ldots,n-1\) obtained in \textit{Step-I} and \(T_{{v}_k}\), \(k=0,1,\ldots,n-1\)
                \STATE set \(p = 0\) and \(t = 0\)
                \FOR {\(r=pn,pn+1,\ldots,(p+1)n - 1\)}
                    \STATE set \(m=0\)
                    \WHILE {\(m < T_{{v}_r}(\ell + 1)\)}
                        \STATE set \(\gamma(t) = s_{r-pn}\), \(m=m+1\) and \(t=t+1\)
                    \ENDWHILE
                \ENDFOR
		          \STATE set \(p=p+1\) and go to 12. 
		\end{algorithmic}
	\end{algorithm}
 \subsection{Sufficient condition for the existence of a contractive cycle}
 It is of interest to know under what conditions we can guarantee the existence of a contractive cycle \(c\) on \(\G\). In this subsection we first provide a technical lemma followed by a sufficient condition for the existence of a contractive cycle which utilize subgraphs to exploit the combinatorial structure of the capacity constraint, which when satisfied, guarantee the existence of a contractive cycle.
    \begin{lemma}
        \label{lem:suff-cond-1}
        Consider the partitions \(\Nsets_j \subseteq\{1,2,\ldots,N\}\), \(j=1,2,\ldots,M\) and respective graphs \(\gdash_j(\Vprime_j, \Edash_j)\) constructed with \(\Nsets_j\) as the set of plants like in Section \ref{s:preliminaries}-B with communication capacity \(M_{\Nsets_j}=1\). If the following conditions hold:
            \begin{enumerate}[label=\alph*), leftmargin =*]
                \item \(\Nsets_m\cap\Nsets_n=\emptyset\) for all \(m,n=1,2,\ldots,M\), \(m\neq n\),
                \item \(\displaystyle{\bigcup_{j=1}^{M}\Nsets_j=\{1,2,\ldots,N\}}\), and
                \item graph \(\gdash_j(\Vprime_j, \Edash_j)\), admits contractive cycle \(c_j\) for all \(j=1,2,\ldots,M\). 
            \end{enumerate}
        Then there exists a cycle \(c\) on \(\G\) which solves the feasibility problem \eqref{e:feasibility-T}.
    \end{lemma}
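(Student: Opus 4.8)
The plan is to build a single cycle on $\G$ as a \emph{product} of the $M$ contractive cycles $c_j$ supplied by hypothesis (c), and to verify the contractivity inequality \eqref{e:contractive-sol-2} by a factorization argument. The guiding observation is that, by (a) and (b), the sets $\Nsets_1,\ldots,\Nsets_M$ partition $\{1,2,\ldots,N\}$, so every plant lies in exactly one $\Nsets_j$; consequently, selecting exactly one active plant from each $\Nsets_j$ occupies exactly $M$ of the network's slots and is therefore a legitimate vertex of $\G$. This lets the $M$ capacity-one schedules be run ``in parallel'' and reduces the per-plant condition on $\G$ to the corresponding per-plant condition on the subgraph $\gdash_j$ containing that plant. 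A crucial simplification is that $\overline{Z}_i(c)$ in \eqref{e:contractive-sol-2} is a purely vertex-indexed sum, so no edge bookkeeping enters and only the labels at the visited vertices matter.

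Concretely, write $c_j = v^j_0,\ldots,v^j_{n_j-1}$ with associated positive-integer $T$-factors $T^{(j)}_0,\ldots,T^{(j)}_{n_j-1}$, where each $v^j_{k}$ designates the single plant of $\Nsets_j$ that is active (labelled $\is$). For a tuple $\mathbf{k}=(k_1,\ldots,k_M)$ with $k_j\in\{0,\ldots,n_j-1\}$ I would define the vertex $V_{\mathbf{k}}\in\V$ whose label sets $\ell_{V_{\mathbf{k}}}(i)=\is$ precisely for the $M$ plants active in $v^1_{k_1},\ldots,v^M_{k_M}$ and $\iu$ for all others. Distinct tuples yield distinct vertices (the active plant of group $j$ is fixed by $k_j$, and the $v^j_k$ are distinct along a cycle), so there are $\prod_{j=1}^M n_j$ such vertices; since $\G$ contains a directed edge between every ordered pair of distinct vertices, any cyclic enumeration of them is a valid cycle $c$ on $\G$. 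Its length obeys $2\le \prod_j n_j \le \prod_j \abs{\Nsets_j}\le \binom{N}{M}=\abs{\V}$, the middle inequality holding because one-per-group selections are a subset of all $M$-subsets. Finally I would assign the multiplicative $T$-factors $T_{V_{\mathbf{k}}}\coloneqq \prod_{j=1}^M T^{(j)}_{k_j}$, again positive integers.

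For the verification, fix a plant $i$ and let $\Nsets_{j_0}$ be the unique part containing it. In \eqref{e:contractive-sol-2} the bracket multiplying $T_{v_k}$ depends only on the label $\ell_{v_k}(i)$ and on the plant-intrinsic scalars $\lambda_{\is},\lambda_{\iu},\mu_{\is\iu},\mu_{\iu\is}$ together with the global bound $\ell$; call it $g_i(v)$. Because whether $i$ is labelled $\is$ in $V_{\mathbf{k}}$ is decided solely by the coordinate $k_{j_0}$, we have $g_i(V_{\mathbf{k}})=g_i(v^{j_0}_{k_{j_0}})$, the latter being exactly the coefficient arising on $\gdash_{j_0}$. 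Summing over all tuples factorizes:
\begin{align*}
  \overline{Z}_i(c)
  &= \sum_{\mathbf{k}} g_i\!\left(v^{j_0}_{k_{j_0}}\right)\prod_{j=1}^M T^{(j)}_{k_j} \\
  &= \left(\sum_{k=0}^{n_{j_0}-1} g_i\!\left(v^{j_0}_{k}\right)T^{(j_0)}_{k}\right)\prod_{j\neq j_0}\left(\sum_{k=0}^{n_j-1} T^{(j)}_{k}\right).
\end{align*}
The first factor equals $\overline{Z}_i(c_{j_0})$, which is negative since $c_{j_0}$ is contractive on $\gdash_{j_0}$ by (c); each remaining factor is a sum of positive $T$-factors and hence positive. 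Thus $\overline{Z}_i(c)<0$ for every $i=1,2,\ldots,N$, so $c$ with these $T$-factors solves \eqref{e:feasibility-T}.

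The main obstacle I anticipate is exactly what the product construction is built to defeat: the cycles $c_j$ have different lengths $n_j$ and mutually incompatible $T$-factors, so they cannot be concatenated or traversed synchronously under one common $T$-factor per step. Passing to the tensor/product cycle with multiplicative $T$-factors is what makes each per-plant sum separate cleanly into a single in-group factor (which carries the sign, supplied by hypothesis (c)) times strictly positive cross-group factors. The only remaining care is bookkeeping: verifying that $g_i$ genuinely ignores the coordinates $k_j$ for $j\neq j_0$ (a direct consequence of the disjointness in (a)) and that the product vertices constitute an admissible cycle of length at most $\abs{\V}$ (immediate from the completeness of $\G$ and the count $\prod_j\abs{\Nsets_j}\le\binom{N}{M}$).
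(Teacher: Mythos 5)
Your proof is correct, but it takes a genuinely different route from the paper's. The paper synchronizes the $M$ capacity-one subcycles to a \emph{common period}: it picks positive integers $a_j$ with $a_j\sum_k T^j_{v_k}=K$ for all $j$, rescales the $T$-factors to $\widetilde{T}^j_{v_k}=a_jT^j_{v_k}$, writes the resulting parallel schedules as the rows of an $N\times K$ matrix, and reads off the cycle $c$ from the distinct columns (with $T$-factors equal to column multiplicities), so that the verification collapses to $\overline{Z}_i(c)=a_j\overline{Z}_i(c_j)<0$. You instead form the full product cycle over all tuples of subcycle vertices with multiplicative $T$-factors $\prod_j T^{(j)}_{k_j}$, and verify contractivity by factorizing the vertex-indexed sum into $\overline{Z}_i(c_{j_0})\prod_{j\neq j_0}\bigl(\sum_k T^{(j)}_k\bigr)$; the key enabling observation — that $\overline{Z}_i$ involves only per-vertex coefficients determined by $\ell_v(i)$, so the cross-group coordinates are invisible to plant $i$ — is sound, and all the bookkeeping (distinctness of product vertices, the bound $\prod_j n_j\le\prod_j\abs{\Nsets_j}\le\binom{N}{M}$, integrality and positivity of the $T$-factors, completeness of $\G$) checks out. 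What each approach buys: the paper's construction yields a much shorter cycle (at most on the order of $\sum_j n_j$ distinct columns versus your $\prod_j n_j$ vertices), which matters for the length of the resulting periodic schedule, but it requires finding the common multiple $K$ and tracking which column patterns actually occur; your construction needs no synchronization, is algebraically cleaner, and does not rely on the paper's preliminary structural claim that a capacity-one contractive cycle must visit each plant's vertex exactly once (so it would extend verbatim to more general subcycles), at the cost of a combinatorially larger cycle. Notably, the paper's own numerical example (all $n_1n_2=6$ pairings with $T$-factor $T^1T^2=6$) actually matches your product construction rather than the column construction in its proof.
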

    A concise proof of the above Lemma is provided in section \ref{s:proofs}.
    \begin{remark}
        \label{remark:prop-1}
        Note that the idea here is to exploit the graph-based representation of the network and divide the problem of identifying a contractive cycle on \(\G\) into independent problems of identifying the same on smaller graphs defined on the partitions. Lemma \ref{lem:suff-cond-1} highlights that if we can partition the set of \(N\) plants into \(M\) smaller, non-overlapping subsets \(\Nsets_j\), \(j=1, 2, \ldots, M\), and form \emph{smaller} graphs \(\gdash_j(\Vprime_j, \Edash_j)\) using \(\Nsets_j\) as the plant set with a communication capacity of \(1\) that support contractive cycles, then it is possible to construct a contractive cycle on the larger graph \(\G\), which uses the full set of \(N\) plants with a communication capacity of \(M\). 
    \end{remark}
    Given Lemma \ref{lem:suff-cond-1} holds, it will be shown in its proof that there is only one cycle for each \(\Nsets_j\), \(j=1,2,\ldots,M\) which can be the candidate for a contractive cycle. As a result, one only needs to check if inequality \eqref{e:contractive-sol-2} holds for that candidate cycle. Let the candidate cycle on \(\gdash_j(\Vprime_j, \Edash_j)\) associated with each partition \(\Nsets_j\) be \(c_j=v_1, (v_1,v_2), v_2, \ldots, v_{\abs{\Nsets_j}},(v_{\abs{\Nsets_j}},v_1),v_1\) with labels \(\ell_{v_k}(i)=\iu\) for all \(i=\Nsets_j\setminus \{k\}\) and \(\ell_{v_k}(i)=\is\) for \(i=k\), for all vertices \(v_k\), \(k=1,2,\ldots,\abs{\Nsets_j}\) with \(T\)-factors \(T^j_{v_k}\), \(k=1,2,\ldots,\abs{\Nsets_j}\). Then
    
    {\small {\begin{align}
        \label{e:cycle-cj}
        \overline{Z}_i(c_j) &= \left(\sum\limits_{k=1}^{\abs{\Nsets_j}}\left(\mathbbm{1}_{\ell_{v_k}(i)}\left(\ell_{v_k}(i)=\is\right)\wbar_i(v_k) \right. \right. \notag \\ 
        &\quad \qquad \left.+ \mathbbm{1}_{\ell_{v_k}(i)}\left(\ell_{v_k}(i)=\iu\right)\wbar_i(v_k)\times(\ell + 1) \right) \notag \\
        &\quad \qquad + \left.\left(\ln{\mu_{\ell_{v_k}(i)\iu}}+\ln{\mu_{\iu\ell_{v_k}(i)}}\right)\right){T}^j_{v_k} \notag \\
        &= \left(-\abs{\ln{\lambda_{i_s}}}+\left(\ln{\mu_{i_s i_u}}+\ln{\mu_{i_u i_s}}\right)\right)T^j_{v_i} \notag \\
        &\quad+ \left(\sum\limits_{k=1}^{\abs{\Nsets_j}}T^j_{v_k} - T^j_{v_i}\right)(\ell + 1)\ln{\lambda_{\iu}} < 0,
    \end{align}}}
    needs to hold for all plants \(i \in \Nsets_j\) for \(c_j\) to be a contractive cycle on \(\gdash_j(\Vprime_j, \Edash_j)\), \(j=1,2,\ldots,M\). 
    Next, we present a sufficient condition for \eqref{e:cycle-cj} to hold. 
    \begin{props}
        \label{props:suff-cond-2}
        Suppose that there exist \(\mathcal{P}_j \subseteq \{1,2,\ldots,N\}\), \(j=1,2,\ldots,M\) such that the following conditions hold:
            \begin{enumerate}[label=\alph*), leftmargin =*]
                \item \(\mathcal{P}_m\cap\mathcal{P}_n=\emptyset\) for all \(m,n=1,2,\ldots,M\), \(m\neq n\),
                \item \(\displaystyle{\bigcup_{j=1}^{M}\mathcal{P}_j=\{1,2,\ldots,N\}}\), and
                \item the following inequality holds for all \(i \in \mathcal{P}_j\), \(j=1,2,\ldots,M\):
            \end{enumerate}
            \begin{align}
                \label{e:suff-cond-2}
                \left(\abs{\ln{\lambda_{i_s}}}-\left(\ln{\mu_{i_s i_u}}+\ln{\mu_{i_u i_s}}\right)\right)>\left(\abs{\mathcal{P}_j}-1\right)(\ell + 1)\ln{\lambda_{\iu}}.
            \end{align}
        Then there exists a \emph{contractive cycle} on \(\G\).    
    \end{props}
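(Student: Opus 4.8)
The plan is to derive Proposition \ref{props:suff-cond-2} directly from Lemma \ref{lem:suff-cond-1} by producing an explicit feasible choice of \(T\)-factors for the candidate cycles. First I would take the partition supplied by the proposition and set \(\Nsets_j \coloneqq \mathcal{P}_j\) for \(j=1,2,\ldots,M\). With this identification, conditions a) and b) of the proposition are verbatim conditions a) and b) of Lemma \ref{lem:suff-cond-1}, so the only thing left to establish is condition c) of the lemma, namely that each graph \(\gdash_j(\Vprime_j, \Edash_j)\) admits a contractive cycle \(c_j\). For this I would use the unique candidate cycle \(c_j\) described after Remark \ref{remark:prop-1}, for which the contractivity requirement has already been reduced to \eqref{e:cycle-cj}.

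The key step is the choice of \(T\)-factors, and here I would make the symmetric assignment \(T^j_{v_k} = T\) for all \(k=1,2,\ldots,\abs{\Nsets_j}\), with \(T\) any positive integer (e.g.\ \(T=1\)). Substituting this into the reduced expression \eqref{e:cycle-cj} gives \(T^j_{v_i} = T\) and \(\sum_{k} T^j_{v_k} - T^j_{v_i} = \left(\abs{\Nsets_j}-1\right)T\), so that \(\overline{Z}_i(c_j) = T\left[\left(-\abs{\ln\lambda_{i_s}} + \ln\mu_{i_s i_u} + \ln\mu_{i_u i_s}\right) + \left(\abs{\Nsets_j}-1\right)(\ell+1)\ln\lambda_{i_u}\right]\). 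Dividing by \(T>0\), the requirement \(\overline{Z}_i(c_j)<0\) becomes exactly the inequality \eqref{e:suff-cond-2} for each plant \(i \in \Nsets_j = \mathcal{P}_j\). Since condition c) of the proposition is precisely \eqref{e:suff-cond-2}, the cycle \(c_j\) is contractive for every \(j=1,2,\ldots,M\), which supplies condition c) of Lemma \ref{lem:suff-cond-1}.

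With all three hypotheses of Lemma \ref{lem:suff-cond-1} verified, the lemma yields a cycle \(c\) on \(\mathcal{G}\) that solves the feasibility problem \eqref{e:feasibility-T}, i.e.\ a contractive cycle on \(\mathcal{G}\) in the sense of Definition \ref{defin:contractive-sol-2}, which is the claim.

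I expect the only genuine subtlety, rather than a real obstacle, to be justifying that the symmetric assignment is both admissible and sufficient. On admissibility, \(T^j_{v_k}=T\) is a positive integer as Definition \ref{defin:contractive-sol-2} demands. On sufficiency, one should note that every plant in \(\Nsets_j\) plays an identical structural role in \(c_j\) (stable at exactly one vertex, unstable at the remaining \(\abs{\Nsets_j}-1\)), so no asymmetric weighting of the \(T\)-factors is needed and the single per-plant inequality \eqref{e:suff-cond-2} captures contractivity exactly. It is also worth recording that the clean reduction in \eqref{e:cycle-cj} relies on the unstable self-transition terms \(\ln\mu_{i_u i_u}\) vanishing, which holds since \(\mu_{i_u i_u}=1\).
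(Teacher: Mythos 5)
Your proposal is correct and follows essentially the same route as the paper: both set \(\Nsets_j=\mathcal{P}_j\), substitute the constant \(T\)-factor assignment \(T^j_{v_k}=T\) into the reduced expression \eqref{e:cycle-cj} to recover \eqref{e:suff-cond-2} exactly, and then invoke the construction in the proof of Lemma \ref{lem:suff-cond-1} to assemble the contractive cycle on \(\G\). Your write-up merely spells out the algebra (the factor-of-\(T\) cancellation and the vanishing \(\ln\mu_{i_u i_u}\) terms) that the paper leaves implicit.
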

    \begin{proof}
        The above conditions are a restatement of the conditions in Lemma \ref{lem:suff-cond-1} with an additional constraint imposed on the third condition of the previous Lemma. Indeed, substituting \(T^j_{v_k}=T^j \; \text{(const.)}\) for \(k=1,2,\ldots,\abs{\mathcal{P}_j}\), \(j=1,2,\ldots,M\) in \eqref{e:cycle-cj} leads to \eqref{e:suff-cond-2}. As a result \eqref{e:suff-cond-2} automatically implies that \eqref{e:cycle-cj} holds and thus is a sufficient condition for the same implying that we can use the procedure in the proof of Lemma \ref{lem:suff-cond-1} for constructing the corresponding contractive cycle on \(\G\). 
    \end{proof}
    \begin{props}
        If the following inequality 
        \begin{align}
        \label{e:suff-cond-3}
        \left(\abs{\ln{\lambda_{i_s}}}-\left(\ln{\mu_{i_s i_u}}+\ln{\mu_{i_u i_s}}\right)\right)>\left(\left\lceil{\frac{N}{M}}\right\rceil -1\right)(\ell + 1)\ln{\lambda_{\iu}},
        \end{align}
        is satisfied for all \(i=1,2,\ldots,N\), then there exists a contractive cycle \(c\) on \(\G\).
    \end{props}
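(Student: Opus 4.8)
The plan is to reduce this statement to Proposition \ref{props:suff-cond-2} by exhibiting a single, carefully chosen partition of the plant indices. The key observation is that the right-hand side of \eqref{e:suff-cond-2} depends on the partition only through the block cardinality \(\abs{\mathcal{P}_j}\), and that this cardinality can always be kept at or below \(\left\lceil N/M \right\rceil\). Concretely, I would first construct a balanced partition \(\mathcal{P}_1, \ldots, \mathcal{P}_M\) of \(\{1,2,\ldots,N\}\) by distributing the \(N\) plant indices as evenly as possible among the \(M\) blocks, so that each block has cardinality either \(\left\lfloor N/M \right\rfloor\) or \(\left\lceil N/M \right\rceil\). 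Writing \(N = qM + r\) with \(0 \leq r < M\), this assigns \(r\) blocks of size \(q+1\) and \(M-r\) blocks of size \(q\); such a partition exists for every \(N,M\) with \(0 < M < N\), and it automatically satisfies conditions (a) and (b) of Proposition \ref{props:suff-cond-2}, since its blocks are pairwise disjoint and their union is \(\{1,2,\ldots,N\}\). Moreover, its largest block has size exactly \(\left\lceil N/M \right\rceil\).

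The second step is to verify condition (c) of Proposition \ref{props:suff-cond-2} for this partition. Since \(\lambda_{\iu} > 1\) we have \(\ln{\lambda_{\iu}} > 0\), and \((\ell + 1) > 0\), so the map \(x \mapsto (x-1)(\ell+1)\ln{\lambda_{\iu}}\) is increasing in \(x\). For the balanced partition \(\abs{\mathcal{P}_j} \leq \left\lceil N/M \right\rceil\) for every \(j\), whence
\[
\left(\abs{\mathcal{P}_j}-1\right)(\ell + 1)\ln{\lambda_{\iu}} \leq \left(\left\lceil \frac{N}{M}\right\rceil - 1\right)(\ell + 1)\ln{\lambda_{\iu}}.
\]
Combining this bound with the hypothesis \eqref{e:suff-cond-3}, which holds for all \(i = 1,2,\ldots,N\) and in particular for all \(i \in \mathcal{P}_j\), yields
\[
\left(\abs{\ln{\lambda_{i_s}}}-\left(\ln{\mu_{i_s i_u}}+\ln{\mu_{i_u i_s}}\right)\right) > \left(\abs{\mathcal{P}_j}-1\right)(\ell + 1)\ln{\lambda_{\iu}},
\]
which is precisely inequality \eqref{e:suff-cond-2}. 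Hence condition (c) is met for the chosen partition.

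With all three hypotheses of Proposition \ref{props:suff-cond-2} established, that proposition guarantees the existence of a contractive cycle on \(\G\), which completes the argument. I do not expect a genuine analytical obstacle here: the entire content is the monotonicity of the right-hand side in the block size, together with the existence of a near-equal partition. The only point requiring a moment of care is confirming that the balanced partition keeps every block size at or below \(\left\lceil N/M \right\rceil\), which is the standard averaging (pigeonhole) fact recorded above; one should also explicitly note the sign conditions \(\ln{\lambda_{\iu}} > 0\) and \(\ell + 1 > 0\), since these are exactly what make the inequality direction in the reduction go through.
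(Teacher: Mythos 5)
Your proposal is correct and follows essentially the same route as the paper: choose a balanced partition so that every block has cardinality at most \(\left\lceil N/M\right\rceil\), observe that the right-hand side of \eqref{e:suff-cond-2} is monotone in the block size because \(\ln\lambda_{\iu}>0\) and \(\ell+1>0\), and then invoke Proposition \ref{props:suff-cond-2}. Your write-up is in fact slightly more careful than the paper's one-line argument, since you make the monotonicity and sign conditions explicit.
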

    \begin{proof}
        The above inequality follows immediately from \eqref{e:suff-cond-2} whenever the number of plants in each partition \(\mathcal{P}_j\), \(j=1,2,\ldots,M\) are either equal (for \(N\%M=0\)) or differ by at most 1 (when \(N\%M\neq 0\)). In both these cases we have \(\abs{\mathcal{P}_j}\leq \lceil{\frac{N}{M}}\rceil\), \(j=1,2,\ldots,M\). Substituting this in \eqref{e:suff-cond-2} results in \eqref{e:suff-cond-3}.
    \end{proof}
    Extreme cases for the above inequality is observed for \(M=1\), when the multiplicative constant in the RHS of \eqref{e:suff-cond-3} is maximum: \(\left(N-1\right)(\ell + 1)\ln{\lambda_{\iu}}\) and for \(M \geq \frac{N}{2}\), when the multiplicative constant in the RHS of \eqref{e:suff-cond-3} is minimum: \((\ell + 1)\ln{\lambda_{\iu}}\). This results in the following immediate corollaries to the above Proposition:
    \begin{coro}
        \label{coro:coro-1}
        If \(M \geq \frac{N}{2}\) and the following inequality 
        \begin{align}
        \label{e:suff-cond-4}
        \left(\abs{\ln{\lambda_{i_s}}}-\left(\ln{\mu_{i_s i_u}}+\ln{\mu_{i_u i_s}}\right)\right)>(\ell + 1)\ln{\lambda_{\iu}},
        \end{align}
        holds for all \(i=1,2,\ldots,N\), then there exists a contractive cycle \(c\) on \(\G\).
    \end{coro}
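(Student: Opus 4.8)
The plan is to obtain this as an immediate specialization of the preceding Proposition (the one establishing inequality \eqref{e:suff-cond-3}) by evaluating the ceiling term \(\left\lceil N/M \right\rceil\) under the hypothesis \(M \geq N/2\). All that needs to be shown is that the multiplicative constant \(\left(\left\lceil N/M \right\rceil - 1\right)\) on the right-hand side of \eqref{e:suff-cond-3} collapses to \(1\) in this regime, so that \eqref{e:suff-cond-3} becomes verbatim \eqref{e:suff-cond-4}.

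First I would pin down the value of \(\left\lceil N/M \right\rceil\) via a two-sided bound. Recall from the problem statement that \(0 < M < N\), so \(N/M > 1\) and hence \(\left\lceil N/M \right\rceil \geq 2\). On the other hand, the hypothesis \(M \geq N/2\) gives \(N/M \leq 2\), so \(\left\lceil N/M \right\rceil \leq 2\). Combining the two bounds forces \(\left\lceil N/M \right\rceil = 2\), and therefore \(\left\lceil N/M \right\rceil - 1 = 1\). With this value in hand, substituting into \eqref{e:suff-cond-3} reduces its right-hand side to \((\ell+1)\ln{\lambda_{\iu}}\), which is exactly the right-hand side of \eqref{e:suff-cond-4}. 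Thus the assumed inequality \eqref{e:suff-cond-4}, holding for all \(i = 1, 2, \ldots, N\), is precisely \eqref{e:suff-cond-3} specialized to \(M \geq N/2\); invoking the preceding Proposition then yields a contractive cycle \(c\) on \(\G\), completing the argument.

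I expect no genuine obstacle here, as the statement is an immediate corollary. The only point requiring a touch of care is the lower bound in the ceiling computation: one must use the standing assumption \(M < N\) (equivalently \(N/M > 1\)) to rule out \(\left\lceil N/M \right\rceil = 1\), which would otherwise leave the constant undetermined and weaken the reduction. Since this assumption is part of the NCS setup of Section \ref{s:prob_stat}, the reduction to \eqref{e:suff-cond-3} is clean and the proof is essentially a one-line substitution.
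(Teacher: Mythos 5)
Your proposal is correct and matches the paper's (implicit) argument: the paper likewise obtains this corollary by noting that for \(M \geq \frac{N}{2}\) the multiplicative constant \(\left(\left\lceil N/M\right\rceil - 1\right)\) in \eqref{e:suff-cond-3} attains its minimum value \(1\), reducing \eqref{e:suff-cond-3} to \eqref{e:suff-cond-4}. Your explicit two-sided bound using \(M < N\) to rule out \(\left\lceil N/M\right\rceil = 1\) is a careful touch the paper leaves unstated, but the route is the same.
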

    \begin{coro}
        \label{coro:coro-2}
        If the following inequality 
        \begin{align}
        \label{e:suff-cond-4}
        \left(\abs{\ln{\lambda_{i_s}}}-\left(\ln{\mu_{i_s i_u}}+\ln{\mu_{i_u i_s}}\right)\right)>(N-1)(\ell + 1)\ln{\lambda_{\iu}},
        \end{align}
        holds for all \(i=1,2,\ldots,N\), then there exists a contractive cycle \(c\) on \(\G\), irrespective of the value of \(M\).
    \end{coro}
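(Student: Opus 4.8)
The plan is to derive this corollary directly from the preceding Proposition that establishes \eqref{e:suff-cond-3}, by recognizing that the corollary's hypothesis is simply the most demanding instance of \eqref{e:suff-cond-3} taken over all admissible capacities \(M\). Concretely, I would show that the inequality assumed in the corollary implies \eqref{e:suff-cond-3} for every \(M\) with \(1 \le M \le N\), and then invoke that Proposition to conclude the existence of a contractive cycle on \(\G\) regardless of \(M\).

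The key step is a uniform bound on the combinatorial factor appearing on the right-hand side. First I would observe that for every integer \(M\) with \(1 \le M \le N\) one has \(\left\lceil N/M \right\rceil \le N\), since \(N/M \le N\) and \(N\) is an integer; equivalently \(\left\lceil N/M \right\rceil - 1 \le N - 1\), with equality attained exactly at \(M = 1\). Next I would note that the coefficient \(\ln{\lambda_{\iu}}\) satisfies \(\ln{\lambda_{\iu}} > 0\) because \(\lambda_{\iu} > 1\), and that \(\ell + 1 \ge 2 > 0\) since \(\ell \in \N\), so the product \((\ell + 1)\ln{\lambda_{\iu}}\) is strictly positive. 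Multiplying the bound \(\left\lceil N/M \right\rceil - 1 \le N - 1\) by this positive quantity preserves its direction and yields \(\left(\left\lceil N/M \right\rceil - 1\right)(\ell + 1)\ln{\lambda_{\iu}} \le (N-1)(\ell + 1)\ln{\lambda_{\iu}}\) for all such \(M\).

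Combining these observations, whenever the corollary's hypothesis holds for all \(i = 1, 2, \ldots, N\), namely \(\abs{\ln{\lambda_{i_s}}} - \left(\ln{\mu_{i_s i_u}} + \ln{\mu_{i_u i_s}}\right) > (N-1)(\ell + 1)\ln{\lambda_{\iu}}\), the above chain gives \(\abs{\ln{\lambda_{i_s}}} - \left(\ln{\mu_{i_s i_u}} + \ln{\mu_{i_u i_s}}\right) > \left(\left\lceil N/M \right\rceil - 1\right)(\ell + 1)\ln{\lambda_{\iu}}\) for every admissible \(M\). This is precisely condition \eqref{e:suff-cond-3}, so the Proposition applies for any value of \(M\) and produces a contractive cycle on \(\G\), which is exactly the claim ``irrespective of the value of \(M\)''.

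Since the argument is a straightforward monotonicity-and-positivity reduction, I do not anticipate a genuine obstacle. The only points that warrant care are verifying the ceiling bound \(\left\lceil N/M \right\rceil \le N\) uniformly over all admissible \(M\) — in particular confirming that the extremal case \(M = 1\) saturates it, which is what makes \((N-1)(\ell+1)\ln{\lambda_{\iu}}\) the worst case — and checking the strict positivity of \((\ell + 1)\ln{\lambda_{\iu}}\), so that enlarging the multiplicative constant on the right-hand side does not reverse the strict inequality.
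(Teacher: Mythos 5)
Your proposal is correct and follows essentially the same route as the paper, which obtains this corollary by observing that \(M=1\) is the extreme case maximizing the multiplicative constant \(\left(\left\lceil N/M\right\rceil-1\right)(\ell+1)\ln{\lambda_{\iu}}\) in \eqref{e:suff-cond-3}, so the hypothesis of the corollary implies \eqref{e:suff-cond-3} for every admissible \(M\). Your explicit verification of the bound \(\left\lceil N/M\right\rceil-1\le N-1\) and of the positivity of \((\ell+1)\ln{\lambda_{\iu}}\) (which follows from \(\lambda_{\iu}>1\) in Fact \ref{facts:lyapunov-like-1}) just makes precise what the paper treats as immediate.
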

\section{Numerical example}
\label{s:numex}
We consider an NCS with \( N = 5 \) discrete-time linear plants and limited communication bandwidth such that \( M = 2 \). Data losses in the two active channels at any time \( t \) are denoted \( \kappa_1(t), \kappa_2(t) \in \mathcal{D}(\ell) \), with \( \ell = 2 \) and \( t \in \mathbb{N} \). Matrices \( A_i \in \mathbb{R}^{2 \times 2} \), \( B_i \in \mathbb{R}^{2 \times 1} \), and \( K_i \in \mathbb{R}^{1 \times 2} \), for \( i = 1, 2, 3, 4, 5 \) are defined as follows. The elements of each \( A_i \) are chosen randomly from the interval \([-2, 2]\), ensuring that each \( A_i \) is Schur unstable. Each \( B_i \) is selected with elements randomly chosen from \([-4, 4]\), and we ensure controllability of each pair \( (A_i, B_i) \). The matrix \( K_i \) is calculated as the discrete-time LQR gain for the pair \( (A_i, B_i) \), using \( Q_i = Q = 5I_{2 \times 2} \) and \( R_i = R = 1 \). The specific matrices are given as:

{\small
\[
A_1 = \begin{pmatrix} 0.0825 & -0.3687 \\ -0.6039 & -0.8382 \end{pmatrix}, \; A_2 = \begin{pmatrix} -0.1379 & -0.0056 \\ 0.6199 & 1.1072 \end{pmatrix},
\]
\[
A_3 = \begin{pmatrix} -0.0719 & -0.0029 \\ 0.6263 & 1.1162 \end{pmatrix}, \; A_4 = \begin{pmatrix} 0.1024 & -0.3600 \\ -0.5975 & -0.8286 \end{pmatrix},
\]
\[
A_5 = \begin{pmatrix} 0.0878 & -0.3470 \\ -0.5946 & -0.8286 \end{pmatrix}.
\]
}
The \( B_i \) matrices are:
{ \small
\[
B_1 = \begin{pmatrix} -0.2825 \\ 3.5699 \end{pmatrix}, \; B_2 = \begin{pmatrix} -0.1856 \\ -1.1214 \end{pmatrix}, \; B_3 = \begin{pmatrix} -0.1350 \\ -1.1353 \end{pmatrix},
\]
\[
B_4 = \begin{pmatrix} -0.1995 \\ 3.5113 \end{pmatrix}, \; B_5 = \begin{pmatrix} -0.2671 \\ 3.5698 \end{pmatrix}.
\]
}
The \( K_i \) matrices are:
{\small
\[
K_1 = \begin{pmatrix} 0.1677 & 0.2231 \end{pmatrix}, \; K_2 = \begin{pmatrix} 0.4442 & 0.8404 \end{pmatrix},
\]
\[
K_3 = \begin{pmatrix} 0.4673 & 0.8519 \end{pmatrix}, \; K_4 = \begin{pmatrix} 0.1691 & 0.2253 \end{pmatrix}, \]
\[K_5 = \begin{pmatrix} 0.1653 & 0.2226 \end{pmatrix}.
\]
}
The scalars \( \lambda_{\is} \), \( \lambda_{\iu} \), \( \mu_{\is\iu} \), and \( \mu_{\iu\is} \) are computed following \citep[Algorithm 2]{atreyee-tcns20} as follows:
{
\small 
\[
\lambda_{1_s} = 0.0043, \; \lambda_{1_u} = 1.0765, \; \mu_{1_s1_u} = 71.2832, \; \mu_{1_u1_s} = 1.9756,
\]
\[
\lambda_{2_s} = 0.0255, \; \lambda_{2_u} = 1.2207, \; \mu_{2_s2_u} = 6.4352, \; \mu_{2_u2_s} = 1.3102,
\]
\[
\lambda_{3_s} = 0.0103, \; \lambda_{3_u} = 1.2434, \; \mu_{3_s3_u} = 6.2035, \; \mu_{3_u3_s} = 1.5210,
\]
\[
\lambda_{4_s} = 0.0077, \; \lambda_{4_u} = 1.0418, \; \mu_{4_s4_u} = 54.8837, \; \mu_{4_u4_s} = 1.7491,
\]
\[
\lambda_{5_s} = 0.0040, \; \lambda_{5_u} = 1.0429, \; \mu_{5_s5_u} = 50.2550, \; \mu_{5_u5_s} = 2.0144.
\]
}
    
    \begin{figure}[htbp]
        \centering
        \includegraphics[height=8cm, width=10cm]{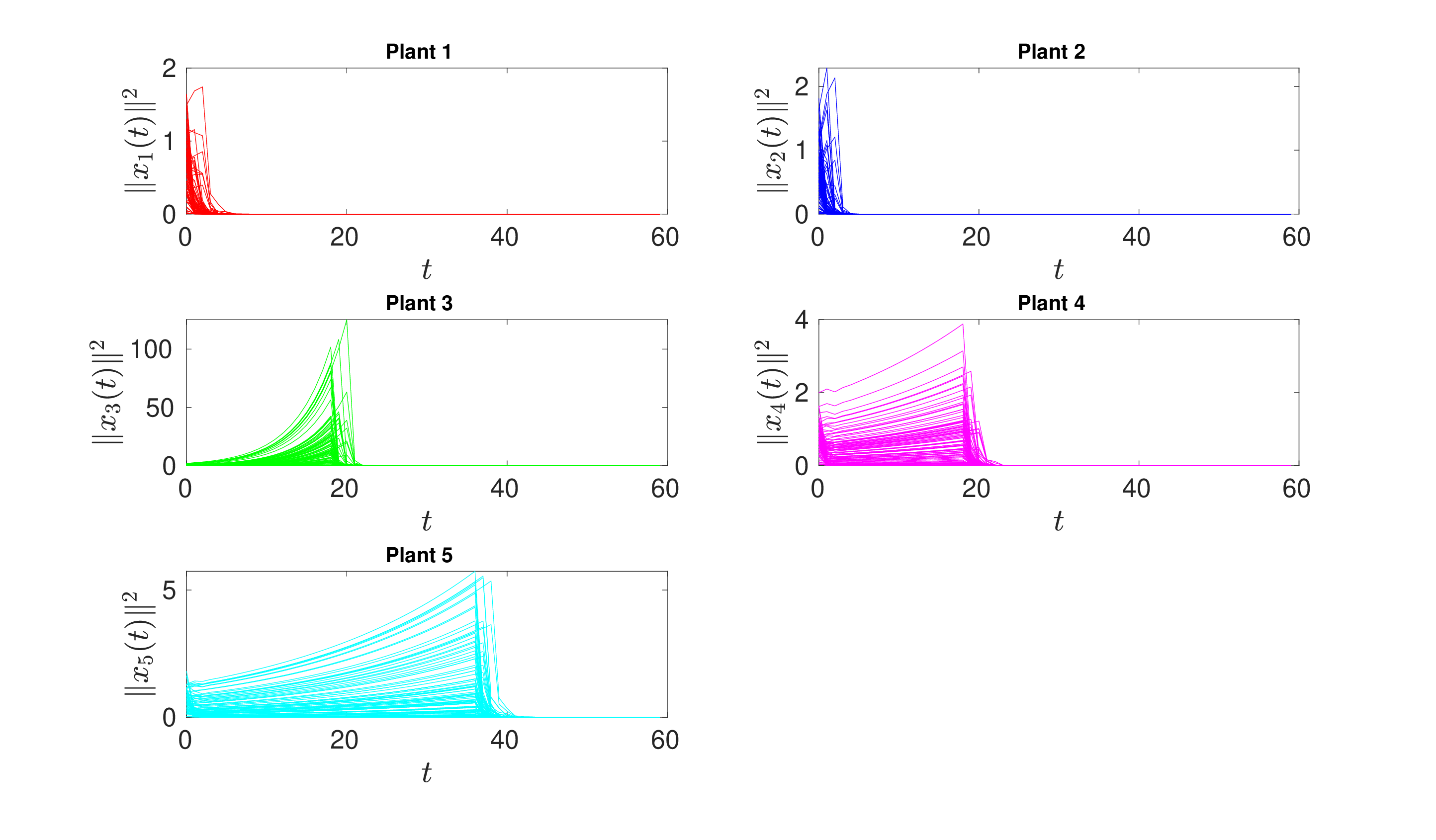}
        \caption{Plot for \(\norm{x_i(t)}^2\) versus \(t\) for all the plants \(i=1,2,3,4,5\), \(t \in [0,60]\)} 
        \label{fig:systems-new}
    \end{figure}
    Here \(\abs{\mathcal{V}}= {N\choose M} = 10\). These vertices are
        \(L_{v_1}=\{1_s,2_u,3_s,4_u,5_u\}\), \(L_{v_2}=\{1_s,2_s,3_u,4_u,5_u\}\),  
        \(L_{v_3}=\{1_s,2_u,3_u,4_s,5_u\}\), \(L_{v_4}=\{1_s,2_u,3_u,4_u,5_s\}\), 
        \(L_{v_5}=\{1_u,2_s,3_s,4_u,5_u\}\), \(L_{v_6}=\{1_u,2_s,3_u,4_s,5_u\}\), 
        \(L_{v_7}=\{1_u,2_s,3_u,4_u,5_s\}\), \(L_{v_8}=\{1_u,2_u,3_s,4_s,5_u\}\), 
        \(L_{v_9}=\{1_u,2_u,3_s,4_u,5_s\}\), \(L_{v_{10}}=\{1_u,2_u,3_u,4_s,5_s\}\).
    Note that the system description above along with the computed scalars satisfy the conditions in Proposition ~\ref{props:suff-cond-2} with \(\mathcal{P}_1=\{1,4,5\}\) and \(\mathcal{P}_2=\{2,3\}\). The \(T\)-factors of the contractive cycles on graphs \(\gdash_1(\Vprime_1, \Edash_1)\) and \(\gdash_2(\Vprime_2, \Edash_2)\) constructed using \(\mathcal{P}_1\) and \(\mathcal{P}_2\) are \(T^1=2\) and \(T^2 = 3\) respectively. Thus we construct the corresponding contractive cycle \(c=v_0,(v_0,v_1),\ldots v_5,(v_5,v_0),v_0\) on \(\G\) following the proof of Lemma \ref{lem:suff-cond-1} with
        \(L_{v_0}=\{1_s,2_s,3_u,4_u,5_u\}\),
        \(L_{v_1}=\{1_u,2_u,3_s,4_s,5_u\}\),
        \(L_{v_2}=\{1_u,2_s,3_u,4_u,5_s\}\),
        \(L_{v_3}=\{1_s,2_u,3_s,4_u,5_u\}\),
        \(L_{v_4}=\{1_u,2_s,3_u,4_s,5_u\}\),
        \(L_{v_5}=\{1_u,2_u,3_s,4_u,5_s\}\).
    and \(T\)-factors \(T_{v_k}=6\), \(k=0,\ldots,5\).
    \begin{figure}[htbp]
        \centering
        \includegraphics[scale=0.17]{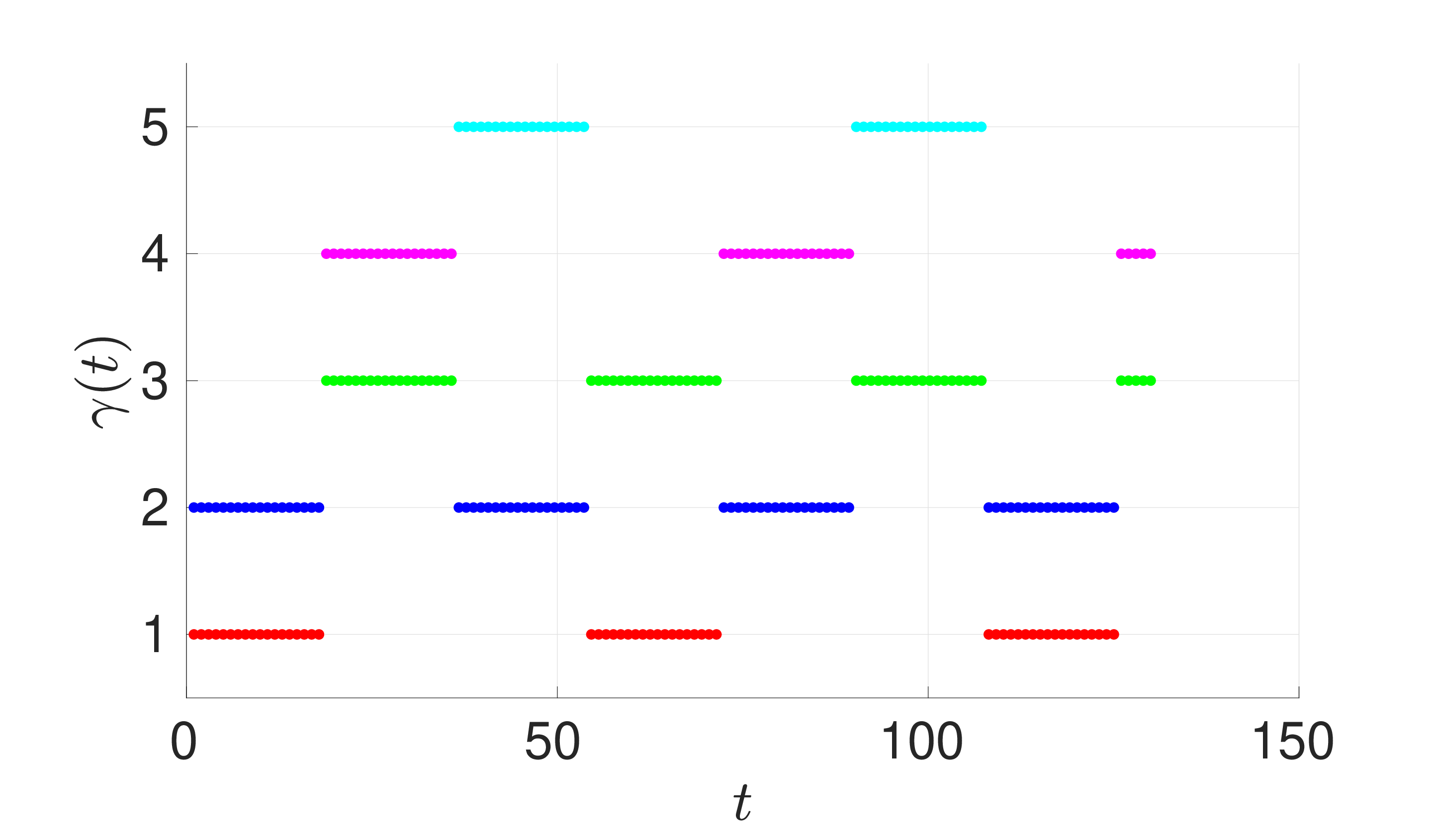}
        \caption{Scheduling logic \(\gamma\) obtained from Algorithm \ref{algo:scheduling_policies-sol-2}}
        \label{fig:gamma-n}
    \end{figure}

    In order to show the \emph{GAS} of the above plants under a scheduling logic \(\gamma\) obtained from Algorithm ~\ref{algo:scheduling_policies-sol-2} with the input cycle as \(c\) and the \(T\)-factors as above, we pick \(100\) different initial conditions at random from \([-10,10]\times [-10,10]\) for \(x_i(0)\) and plot \((\norm{x_i(t)}^2)_{t \geq 0}\) for \(i=1,2,3,4,5\) under admissible data loss signals \((\kappa_1)_{t\geq 0}\) and \((\kappa_2)_{t \geq 0}\) which are chosen at random for each of the \(100\) different choice of initial conditions. The plot obtained from simulations is as in Figure \ref{fig:systems-new} for \(t \in [0,60]\).

    The plot of the stabilizing scheduling logic \(\gamma(t)\) for \(t \in [0,130]\) as obtained by applying Algorithm \ref{algo:scheduling_policies-sol-2} is shown in Figure \ref{fig:gamma-n}.
\section{Acknowledgement}
The author would like to thank Atreyee Kundu for helpful discussions on the problem and comments on a preliminary version of this draft.
\section{Concluding remarks}
\label{s:concln}
    In this article we solved the problem of designing scheduling logic for NCSs with capacity-constrained communication links in the presence of data losses which ensure that all the plants are globally asymptotically stable under all admissible data loss patterns. A next natural direction of work is the co-design of the state-feedback gains \(K_i\), \(i=1,2,\ldots,N\) along with \(\gamma\), investigating similar conditions for continuous-time systems and with probabilistic packet-drops. These are something we are working on and will be reported elsewhere.  
\section{Proofs of our results}
\label{s:proofs}
\subsection{\textbf{Proof of Theorem \ref{thrm:main_res-sol-2}}}
    \begin{proof}
        We consider the NCS under data losses as in Section \ref{s:prob_stat} and its associated directed graph \(\G\). Let \(c=\walk\) be a \emph{contractive cycle} on \(\G\) with associated \(T\)-factors \(T_{v_k}\), \(k=0,1,\ldots,n-1\). 
        We fix an arbitrary plant \(i \in \{1,2,\ldots,N\}\). Using the switched system modelling of the plants, it is sufficient to show that the switching logic \(\sigma_i\) corresponding to \(\gamma\) ensures GAS of plant \(i\). Let \(0=\tau_0 < \tau_1 < \ldots\) be the times at which \(\gamma(t)\) changes its values, as obtained from Algorithm \ref{algo:scheduling_policies-sol-2}. Using the definition of the multiple lyapunov-like functions, we can write for \(t \in \N\) :  \\
        \begin{align}
            \label{e:eqn-1}
            V_{\sigma_i(t)}(x_i(t)) &\leq \lambda_{\sigma_i(\tau_{N_t^{\gamma}})}^{t-\tau_{N_t^{\gamma}}}V_{\sigma_i(\tau_{N_t^{\gamma}})}(x_i(\tau_{N_t^{\gamma}}))
        \end{align}
        Iterating \eqref{e:eqn-1} gives:
        \begin{align}
            \label{e:eqn2}
            V_{\sigma_i(t)}(x_i(t)) &\leq \left[ \prod\limits_{\substack{q=0 \\ \tau_{N_t^{\gamma}+1} \coloneqq t}}^{N_t^{\gamma}}\lambda_{\sigma_i(\tau_q)}^{\tau_{q+1} - \tau_q} \cdot \prod\limits_{q=0}^{N_t^{\gamma}-1}\mu_{\sigma_i(\tau_q)\sigma_i(\tau_{q+1})} \right] \notag \\
            &\qquad V_{\sigma_i(0)}(x_i(0))
        \end{align}
        where \(\lambda_{\is}, \lambda_{\iu}, \mu_{\is\iu}, \mu_{\iu\is}\), \(i=1,2,\ldots,N\) are as stated in Facts \ref{facts:lyapunov-like-1} and \ref{facts:lyapunov-like-2}. \\
        The first term on the RHS of the inequality \eqref{e:eqn2} can be written as:
        \begin{align*}
            \exp{\left(\ln{\left(\prod\limits_{\substack{q=0 \\ \tau_{N_t^{\gamma}+1} \coloneqq t}}^{N_t^{\gamma}}\lambda_{\sigma_i(\tau_q)}^{\tau_{q+1} - \tau_q}\right)} + \ln{\left(\prod\limits_{q=0}^{N_t^{\gamma}-1}\mu_{\sigma_i(\tau_q)\sigma_i(\tau_{q+1})}\right)}\right)}
        \end{align*}
        Now,
        \begin{align}
            \label{e:expamsion}
            &\ln{\left(\prod\limits_{\substack{q=0 \\ \tau_{N_t^{\gamma}+1} \coloneqq t}}^{N_t^{\gamma}}\lambda_{\sigma_i(\tau_q)}^{\tau_{q+1} - \tau_q}\right)} \notag \\ &=\sum\limits_{\substack{q=0 \\ \tau_{N_t^{\gamma}+1} \coloneqq t}}\left(\sum\limits_{p \in \{\is, \iu\}}\mathbbm{1}_{\sigma_i(\tau_q)}(p)(\tau_{q+1} - \tau_q)\ln{\lambda_p}\right)
        \end{align}
        Let \(D_{s}(s,t)\), denote the total number of \(q \in \N\), such that \(\sigma_i(\tau_q) = \is\),  and \(\tau_q \in \left]s:t\right]\). Similarly, let \(D_u(s,t)\) denote the total number of \(q \in \N\), such that \(\sigma_i(\tau_q) = \iu\), and \(\tau_q \in \left]s:t\right]\). Now, since \(\lambda_{\iu} > 1\) and \(0 < \lambda_{\is} < 1\), the RHS of \eqref{e:expamsion} can be written as \(-\abs{\ln{\lambda_{\is}}}D_{s}(0,t) + \abs{\ln{\lambda_{\iu}}}D_u(0,t)\). Now, let \(N_{pr}(s,t)\) denote the number of transitions from mode \(p\) to mode \(q\) in the interval \(\left]s:t\right]\), where \(p,q \in \{\is, \iu\}\), . Then we can right the second term in the RHS of \eqref{e:eqn2} as \(\ln{\left(\prod\limits_{q=0}^{N_t^{\gamma}-1}\mu_{\sigma_i(\tau_q)\sigma_i(\tau_{q+1})}\right)} \notag = \ln{\mu_{\is\iu}}N_{\is\iu}(0,t) + \ln{\mu_{\iu\is}}N_{\iu\is}(0,t)\). Recall that \(\mu_{\is\is} = \mu_{\iu\iu} = 1\). Substituting these in \eqref{e:eqn2} we have
        \begin{align}
            \label{e:compact-form-2}
            V_{\sigma_i(t)}(x_i(t)) \leq \psi_i(t)V_{\sigma_i(0)}(x_i(0)),
        \end{align}
        where
        \begin{align*}
            \N \ni t \mapsto \psi_i(t) &\coloneqq \exp{\left(-\abs{\ln{\lambda_{\is}}}D_{s}(0,t) + \abs{\ln{\lambda_{\iu}}}D_u(0,t) \right.} \\
            & \quad {\left.+ \ln{\mu_{\is\iu}}N_{\is\iu}(0,t)+ \ln{\mu_{\iu\is}}N_{\iu\is}(0,t)\right)},
        \end{align*}
        and \(D_{s}(s,t)\) denotes the total number of \(q \in \N\), such that \(\sigma_i(\tau_q) = \is\),  and \(\tau_q \in \left]s:t\right]\). Similarly, \(D_u(s,t)\) denotes the total number of \(q \in \N\), such that \(\sigma_i(\tau_q) = \iu\), and \(\tau_q \in \left]s:t\right]\). \(N_{pr}(s,t)\) denotes the number of transitions from mode \(p\) to mode \(q\) in the interval \(\left]s:t\right]\), where \(p,q \in \{\is, \iu\}\).
        We know from Facts \ref{facts:lyapunov-like-1} and \ref{facts:lyapunov-like-2} and from the properties of positive definite matrices \citep[Lemma 8.4.3]{Bernstein2009} that \eqref{e:compact-form-2} can be written as \(\norm{x_i(t)} \leq c\psi_i(t)\norm{x_i(0)}\), where \(c=\sqrt{\frac{\text{max}_{p \in \{\is, \iu\}}\lambda_{\text{max}}P_p}{\text{min}_{p \in \{\is, \iu\}}\lambda_{\text{min}}P_p}}\), for \(j \in \{1,2,\ldots,n_i\}\). By Definition \ref{defin:gas} in order to establish GAS of plant \(i\), we need to show that \(c\norm{x_i(0)}\psi_i(t)\) is bounded above by a class \(\mathcal{KL}\) function. We can see that \(c\norm{x_i(0)}\) is a class \(\mathcal{K}_\infty\) function. Thus in order to prove our result, we need to show that \(\psi_i(t)\) is bounded above by a class \(\mathcal{L}\) function. \\
        Using Lemma \ref{lemma:total-time-sol-2}, we define \(\mathcal{T}_m = m\times \left(\sum\limits_{k=0}^{n-1}{T}_{v_k}(\ell + 1)\right)\) and let \(\epsilon = \max\limits_{i=1,2,\ldots,N}\overline{Z}_i(c) < 0\). Then \(\psi_i(t)\) can be written as
        \begin{align}
            \label{e:pf-m-contr-2}
            & \psi_i(t) = \exp \left(\left(-\abs{\ln{\lambda_{\is}}}D_{s}(0,\mathcal{T}_{m-1}) + \abs{\ln{\lambda_{\iu}}}D_u(0,\mathcal{T}_{m-1})\right. \right. \notag \\
            &\quad \left. \left.+ \ln{\mu_{\is\iu}}N_{\is\iu}(0,\mathcal{T}_{m-1}) + \ln{\mu_{\iu\is}}N_{\iu\is}(0,\mathcal{T}_{m-1})\right)\right. \\
            &\left. + \left(-\abs{\ln{\lambda_{\is}}}D_{s}(\mathcal{T}_{m-1},t) + \abs{\ln{\lambda_{\iu}}}D_u(\mathcal{T}_{m-1},t) \right. \right. \notag \\
            &  \left. \left. + \ln{\mu_{\is\iu}}N_{\is\iu}(\mathcal{T}_{m-1},t) + \ln{\mu_{\iu\is}}N_{\iu\is}(\mathcal{T}_{m-1},t) \right)\right). \notag 
        \end{align}
        Notice, from Lemmas \ref{lemma:t-sol-2} and \ref{lemma:total-time-sol-2} it is immediate that:
        {\small \begin{align}
            \label{e:pf-m-ineq-2}
            &\left(-\abs{\ln{\lambda_{\is}}}D_{s}(0,\mathcal{T}_{m-1}) + \abs{\ln{\lambda_{\iu}}}D_u(0,\mathcal{T}_{m-1}) \right. \notag \\
            & \left.+ \ln{\mu_{\is\iu}}N_{\is\iu}(0,\mathcal{T}_{m-1}) + \ln{\mu_{\iu\is}}N_{\iu\is}(0,\mathcal{T}_{m-1})\right)  \notag \\
            &\leq -\abs{\ln{\lambda_{\is}}}(m-1)\times \sum\limits_{k=0}^{n-1}\mathbbm{1}_{v_k}(\ell_{\vbar_k}(i)=\is){T}_{v_k} \notag \\
            &+ \abs{\ln{\lambda_{\iu}}}(m-1)\times \sum\limits_{k=0}^{n-1}\mathbbm{1}_{v_k}(\ell_{v_k}(i)=\iu){T}_{v_k}(\ell + 1) \notag \\
            &+ (m-1)\times \sum\limits_{k=0}^{n-1}\left(\ln{\mu_{\ell_{v_k}(i)\iu}}+\ln{\mu_{\iu\ell_{v_k}(i)}}\right){T}_{v_k} \notag \\
            &= (m-1)\times \overline{Z}_i(c) \leq (m-1)\times \epsilon.
        \end{align}}
        Also, notice that,
        {\small \begin{align}
            \label{e:pf-a-ineq}
            &\left(-\abs{\ln{\lambda_{\is}}}D_{s}(\mathcal{T}_{m-1},t) + \abs{\ln{\lambda_{\iu}}}D_u(\mathcal{T}_{m-1},t) \right. \notag \\
            &\left.+ \ln{\mu_{\is\iu}}N_{\is\iu}(\mathcal{T}_{m-1},t) + \ln{\mu_{\iu\is}}N_{\iu\is}(\mathcal{T}_{m-1},t)\right) \notag \\
            & \leq \ln{\lambda_{\iu}}(t-\mathcal{T}_{m-1}) + \left(\ln{\mu_{\is\iu}}+\ln{\mu_{\iu\is}}\right)\left(\sum\limits_{k=0}^{n-1}{T}_{v_k}\right)(\ell + 1) \notag\\
            &\coloneqq a \text{ (say).}
        \end{align}}
        From \eqref{e:pf-m-ineq-2} and \eqref{e:pf-a-ineq}, it is clear that \(\psi_i(t)\) is upper bounded by \(\exp{\left(-m\abs{\epsilon} + a\right)}\). \\
        Let \(\Gamma=\sum\limits_{k=0}^{n-1}{T}_{v_k}(\ell + 1)\). Let \(\varphi_i \colon [0:t] \rightarrow \R\) be a function connecting \((0, e^{a}+\Gamma)\), \((k\Gamma, e^{(-(k-1)\abs{\epsilon}+a)})\), \((t, e^{(-m\abs{\epsilon}+a)})\), for all \(k=1,2,\ldots,m\). By construction, \(\varphi_i\) is an upper envelope of \(t^{\prime} \mapsto \psi_i(t^{\prime})\), \(t^{\prime} \in [0:t]\). It is continuous, decreasing and tends to \(0\) as \(t \tendsto +\infty\). Hence, \(\varphi_i \in \mathcal{L}\). \\
        Since the plant \(i\), \(i=1,2,\ldots,N\), was selected arbitrarily, and the analysis done for any admissible data loss signal, it follows that the assertion of Theorem \ref{thrm:main_res-sol-2} holds for all plants \(i\) in \eqref{e:plants}.
    \end{proof}
\subsection{\textbf{Proof of Lemma \ref{lem:suff-cond-1}}}
    \begin{proof}
        Let the disjoint partitions of the entire set of \(N\) plants be denoted by \(\Nsets_j\), with the underlying graphs \(\gdash_j(\Vprime_j, \Edash_j)\), \(j=1,2,\ldots,M\). We do the proof in two parts. First we show that since \(M_{\Nsets_j}=1\), \(j=1,\ldots,M\) a contractive cycle \(c_j\), \(j=1,\ldots,M\) can have only a very specific structure. Using the same, we discuss next if \(c_j\) is a contractive cycle, the procedure to obtain the contractive cycle \(c\) for the graph defined for the overall network \(\G\). 
        
        A contractive cycle \(c_j\) on \(\gdash_j(\Vprime_j, \Edash_j)\), \(j=1,\ldots,M\) must satisfy \(c_j=v_1, (v_1,v_2), v_2, \ldots, v_{\abs{\Nsets_j}},(v_{\abs{\Nsets_j}},v_1),v_1\), where \(v_k \in \Vprime_j\), \(k=1,\ldots,\abs{\Nsets_j}\) with labels \(\ell_{v_k}(i)=\iu\) for all \(i=\{1,2,\ldots,\abs{\Nsets_j}\}\setminus \{k\}\) and \(\ell_{v_k}(i)=\is\) for \(i=k\), for all vertices \(v_k\), \(k=1,2,\ldots,\abs{\Nsets_j}\). The reason is that for \(M_{\Nsets_j}=1\) any other choice will cause the cycle to have atleast one vertex which does not get activated at all and thus can never be contractive. If \(c_j\), \(j=1,2,\ldots,M\) are all contractive cycles, then from \eqref{e:contractive-sol-2} there exist positive integers \(T_{v_k}^{j}\), \(k=1,2,\ldots,\abs{\Nsets_j}\) for which \(\overline{Z}_i(c_j)<0\) for each \(i \in \Nsets_j\). 
        
        To obtain the overall contractive cycle \( c \) on \( \mathcal{G} \) along with its associated \( T \)-factors, we proceed as follows: let \( a_j \) be positive integers chosen such that \( a_j \times \sum\limits_{k=1}^{\abs{\mathcal{N}_j}} T^j_{v_k} = K \; \text{(constant)} \) for each \( j=1,2,\ldots,M \). Define \( \widetilde{T}^j_{v_k} = a_j \times T^j_{v_k} \), for \( k=1,2,\ldots,\abs{\mathcal{N}_j} \) for all \( j=1,2,\ldots,M \). Next, we construct a matrix \( \mathcal{M} \) of dimensions \( N \times K \), where each row \( i \), with \( i \in \Nsets_j \) for \( j = 1, 2, \ldots, M \), is filled by \( \widetilde{T}^j_{v_k} \)-many consecutive entries of the form \( \ell_{v_k}(i) \) sequentially for \( k = 1, 2, \ldots, \abs{\Nsets_j} \). Upon completion, each column of \( \mathcal{M} \) contains exactly \( M \) closed-loop (stable mode) labels, due to the partitioning and the property \( M_{\Nsets_j} = 1 \) for all \( j \). Once the matrix \( \mathcal{M} \) is filled, due to the partitioning and the property \( M_{\Nsets_j} = 1 \) for all \( j = 1, 2, \ldots, M \), each column contains exactly \( M \) closed-loop (stable mode) labels. By identifying the unique columns (say there are \( n \) of these), we can define the vertices of a cycle \( c = \cycle \) on \( \G \) by stacking the labels from each unique column into an \( N \times 1 \) vector. This results in vertices \( \vbar_k \in \V \) for \( k = 0, 1, \ldots, n - 1 \). Let \( T_{\vbar_k} \) represent the frequency of each distinct column (corresponding to vertex \( \vbar_k \)) in \( \mathcal{M} \). Thus, the cycle \( c \) on \( \G \), with \( T \)-factors \( T_{\vbar_k} \) for \( k = 0, 1, \ldots, n - 1 \), is contractive due to the constructed properties of \( \mathcal{M} \) and the vertices \( \vbar_k \), satisfying all required conditions for \( i \in \Nsets_j \) for \( j = 1, 2, \ldots, M \).
        
        {\small \begin{align}
            \label{e:suff-cond-1}
            \overline{Z}_i(c) &\coloneqq \left(\sum\limits_{k=0}^{n-1}\left(\mathbbm{1}_{\ell_{\vbar_k}(i)}\left(\ell_{\vbar_k}(i)=\is\right)\wbar_i(\vbar_k) \right. \right. \notag \\ 
            &\quad \quad \qquad \left.+ \mathbbm{1}_{\ell_{\vbar_k}(i)}\left(\ell_{\vbar_k}(i)=\iu\right)\wbar_i(\vbar_k)\times(\ell + 1) \right) \notag \\
            &\quad \qquad + \left.\left(\ln{\mu_{\ell_{\vbar_k}(i)\iu}}+\ln{\mu_{\iu\ell_{\vbar_k}(i)}}\right)\right)T_{\vbar_k} \notag \\
            &= \left(\sum\limits_{k=1}^{\abs{\Nsets_j}}\left(\mathbbm{1}_{\ell_{v_k}(i)}\left(\ell_{v_k}(i)=\is\right)\wbar_i(v_k) \right. \right. \notag \\ 
            &\quad \quad \qquad \left.+ \mathbbm{1}_{\ell_{v_k}(i)}\left(\ell_{v_k}(i)=\iu\right)\wbar_i(v_k)\times(\ell + 1) \right) \notag \\
            &\quad \qquad + \left.\left(\ln{\mu_{\ell_{v_k}(i)\iu}}+\ln{\mu_{\iu\ell_{v_k}(i)}}\right)\right)\widetilde{T}^j_{v_k} \notag \\
            &= a_j\times \overline{Z}_i(c_j) < 0 
        \end{align}}
        
    \end{proof}
\bibliographystyle{ieeetr}
\bibliography{refr}



\end{document}